\newtheorem{theorem}{Theorem}
\newtheorem{definition}[theorem]{Definition}
\newtheorem{example}[theorem]{Example}
\newtheorem{lemma}[theorem]{Lemma}
\newenvironment{proof}[1][Proof]{\noindent\textbf{#1.} }{$\Box$\\}
\begin{document}
	
	\title{Necessary players and values}
	\author{J.C. Gon\c{c}alves-Dosantos$^1$, I. Garc\'{\i}a-Jurado$^1$, \\J. Costa$^2$, J.M. Alonso-Meijide$^3$}
	\date{{\today}}
	\maketitle
	
		\footnotetext[1]{Grupo MODES, CITIC and Departamento de Matem\'aticas, Universidade da Coru\~{n}a, Campus de Elvi\~{n}a, 15071 A Coru\~{n}a, Spain.} \footnotetext[2]{Grupo MODES, Departamento de Matem\'aticas, Universidade da Coru\~{n}a,
			Campus de Elvi\~{n}a, 15071 A Coru\~{n}a, Spain.}
	\footnotetext[3]{Grupo MODESTYA, Departamento de Estat\'{\i}stica, An\'alise Matem\'atica e
		Optimizaci\'on, Universidade de Santiago de Compostela, Campus de Lugo, 27002 Lugo, Spain.}

\begin{abstract}
In this paper we introduce the $\Gamma$ value, a new value for cooperative games with transferable utility. We also provide an axiomatic characterization of the $\Gamma$ value based on a property concerning the so-called necessary players. A necessary players of a game is one without which the characteristic function is zero. We illustrate the performance of the $\Gamma$ value in a particular cost allocation problem that arises when the owners of the apartments in a building plan to install an elevator and share its installation cost; in the resulting example we compare the proposals of the $\Gamma$ value, the equal division value and the Shapley value in two different scenarios. In addition, we propose an extension of the $\Gamma$ value for cooperative games with transferable utility and with a coalition structure. Finally, we provide axiomatic characterizations of the coalitional $\Gamma$ value and of the Owen and Banzhaf-Owen values using alternative properties concerning necessary players.
\end{abstract}
	
\noindent 
{\bf Keywords:} cooperative game, necessary player, coalition structure, value.

\section{Introduction}
The Shapley value, introduced in Shapley (1953), is a rule for distributing the benefits that a set of agents $N$ can generate, taking into account the contributions of each agent and of each possible subset of $N$. The Shapley value is one of the most important solutions of cooperative game theory and it has many applications in a wide variety of fields. For instance, in recent years the Shapley value has been applied to cancer research (see Albino et al., 2008), to machine learning (see Strumbelj and Kononenko, 2010), to data envelopment analysis (see Yang and Zhang, 2015), to image classification (see Gurran et al., 2016), to project management (see Bergantiños et al., 2018, and Gonçalves-Dosantos et al., 2020), etc. Moretti and Patrone (2008) is a survey explaining the transversality of the Shapley value.

The game theory literature provides many alternatives to the Shapley value, such as the nucleolus (Schmeidler, 1969), the Banzhaf Value (Owen, 1975), the $\tau$-value (Tijs, 1981), the equal-surplus division value (Driessen and Funaki, 1991) or, more recently, the consensus value (Ju et al., 2007) and the ie-Banzhaf value (Alonso-Meijide et al., 2019b). All those alternative values have appealing properties and could be used instead of the Shapley value. In order to decide what is the most appropriate value for a particular problem it is helpful to know the properties that are essentially connected to each value. This is why game theory is interested in the so-called characterizations: to characterize a value in a class of games is to find a set of properties so that it is the only value that fulfills them in that class. For instance, in Luchetti et al. (2010), two relevance indexes for genes are compared, one based on the Shapley value and the other based on the Banzhaf value, and for that purpose they are characterized in the corresponding class of games, the so-called microarray games.

In this article we introduce a value for cooperative games that results from proposing a new property for so-called necessary players that, in a way, corrects the properties for such players met by the Shapley and Banzhaf values. Informally, necessary players are those without whom the characteristic function of the game would be zero. These players have attracted the attention of game theorists for axiomatic studies in the last years. For instance, Alonso-Meijide et al. (2019a) and Béal and Navarro (2020) are two recent papers dealing with necessary players and characterizations. Apart from introducing a new value, in this paper we provide an axiomatic characterization of it, which allows to compare the new value with other solution concepts for cooperative games. Furthermore, we extend and characterize the new value for cooperative games with a coalition structure. A cooperative game with a coalition structure models those situations where the agents in a set $N$ aim to distribute the benefits they generate taking into account the contributions of each agent and of each possible subset of $N$, as well as a coalition structure (a partition of $N$) that conditions the distribution, in the sense that distribution among the classes of the partition is made first and, then, a distribution within those classes is performed. Cooperative games with a coalition structure have been applied in several fields like political analysis (see, for instance, Carreras and Puente, 2015), infrastructure management (see Costa, 2016), cost allocation (see Fragnelli and Iandolino, 2004), etc. The Owen value (Owen, 1977) and the Banzhaf-Owen value (Owen, 1982) are, respectively, the variations of the Shapley value and the Banzhaf value for cooperative games with a coalition structure. In this paper we also provide new characterizations of the Owen  and the Banzhaf-Owen values using properties involving necessary players.

The structure of this paper is as follows. In Section \ref{seccion2} we introduce the $\Gamma$ value, a new value for cooperative games. We also provide an axiomatic characterization of the $\Gamma$ value and illustrate its behaviour in a practical example that arises in a problem of sharing the costs of installing an elevator. In Section \ref{seccion3} we provide new characterizations of the Owen and Banzhaf-Owen values and introduce and characterize an extension of the $\Gamma$ value for cooperative games with a coalitional structure. We finish the paper with a section of concluding remarks.
	
%
	
	\section{Values and necessary players}
	\label{seccion2}
	
	A cooperative game is a pair $(N,v)$ given by a finite set of players $N$ and a characteristic function $v:2^N\rightarrow\mathbb{R}$, that assigns to each {coalition} $S\subseteq N$ a real number $v(S)$ that indicates the benefits that coalition $S$ is able to generate; by definition $v(\emptyset)=0$. We denote by $\mathcal{G}_N$ the family of all cooperative games with player set $N$.
	
	A \textit{value} for cooperative games is a map $f$ that assigns to every game $(N,v)\in\mathcal{G}_N$ a vector $f(N,v)\in\mathbb{R}^N$. Two of the most important values for cooperative games are the Shapley value (Shapley, 1953) and the Banzhaf value (Owen, 1975). A number of characterizations of these two values can be found on the literature. For example, Alonso-Meijide et al. (2019a) provides characterizations of those values using only three properties for each of them: two common properties and one extra property concerning the so-called {\em necessary players} that differs for the Shapley and the Banzhaf values. In this paper we concentrate on characterizations of values involving necessary players. Let us first remember the formal definition of a necessary player.
	
	\begin{definition}
	A player $i\in N$ is said to be necessary in the cooperative game $(N,v)$ if $v(S)=0$ for all $S\subseteq N\setminus\{ i\}$.
	\end{definition}
	
	
	In words, a necessary player is one without whom cooperation does not produce any results. In fact, notice that if $i$ is necessary in $(N,v)$, then $v(S)=\sum_{j\in S}v(\{ j\})=0$ for all $S\subseteq N\setminus\{ i\}$; hence, the game resulting after the elimination of $i$ is additive and null. Necessary players often arise in real situations. Take, for instance, the following example.
	
	\begin{example}
		Consider a council formed by three entities with $24,15$ and $9$ votes, respectively. Any proposal must receive at least $25$ votes to be approved. In the resulting voting game, it is easy to see that the entity with $24$ votes is a necessary player because, without it, the other two entities cannot get any proposals approved.
	\end{example}
	
	In some specific problems, as in the example above, the necessary players arise in a natural way and, therefore, a characterization based on such players can be relevant in deciding what value to use in those problems. We start by remembering the characterizations of the Shapley and Banzhaf values in Alonso-Meijide et al. (2019a) and some other preliminary material.
	The Shapley value $\varphi $ is defined as 
$$
\varphi _{i}\left( N,v\right) =\frac{1}{n}\sum_{S\subseteq N\setminus \{i\}}\frac{1}{{n-1 \choose s}}(v\left( S\cup \{i\}\right) -v\left( S\right) )
$$	
for all $(N,v)\in\mathcal{G}_N$ and all $i\in N$; $n$ and $s$ denote the cardinalities of $N$ and $S$, respectively.
The Banzhaf value $\beta $ is defined as
$$
\beta _{i}\left( N,v\right) =\frac{1}{2^{n-1}}\sum_{S\subseteq N\setminus \{i\}}(v\left( S\cup \{i\}\right) -v\left( S\right) )
$$
for all $(N,v)\in\mathcal{G}_N$ and all $i\in N$. Both the Shapley and the Banzhaf value are additive. This means that they satisfy the following condition.

\vspace*{0.25cm}

\noindent
\textbf{Additivity.} A value for cooperative games $f$ satisfies the property of additivity if for any pair of cooperative games $\left( N,v\right) ,\left( N,w\right) $ it holds that 
$$
f\left( N,v+w\right) =f\left( N,v\right) +f\left( N,w\right) .
$$ 
Additivity is a good property because, at the same time that it is natural and easily interpretable, it greatly facilitates the mathematical analysis of the values that comply with it and the calculation of such values; for instance, Benati et al. (2019) provides a method to approximate additive values in cooperative games that is useful when the number of players is large. 

Another reasonable property that is satisfied by the Shapley and Banzhaf value concerns null players. Remember that a null player of $(N,v)$ is an $i\in N$ such that $v(S)=v(S\cup\{ i\})$ for all $S\subseteq N\setminus\{ i\}$.

\vspace*{0.25cm}

\noindent	
\textbf{Null Player.} A value for cooperative games $f$ satisfies the property of null player if for any cooperative game $\left( N,v\right) $ and for any $i\in N$ null player of $\left( N,v\right) $, it holds that 
$f_{i}\left( N,v\right) =0$.

\vspace*{0.25cm}
	
Now let us see two alternative properties for necessary players introduced in Alonso-Meijide et al. (2019a) and the main result concerning them.

\vspace*{0.25cm}

\noindent
\textbf{Necessary Players Get the Weighted Mean.} A value for cooperative games $f$ satisfies the property of necessary players get the weighted mean if, for all cooperative game $\left( N,v\right) $ and for all $i\in N$ necessary player in $\left( N,v\right) $, it holds that 
$$
f_{i}\left( N,v\right) =\frac{1}{n}\sum_{S\subseteq N, i\in S}\frac{1}{{ n-1 \choose s-1}}v(S).
$$
\textbf{Necessary Players Get the Mean.}  A value for cooperative games $f$ satisfies the property of necessary players get the mean if, for all cooperative game $\left( N,v\right) $ and for all $i\in N$ necessary player in $\left( N,v\right) $, it holds that 
$$
f_{i}\left( N,v\right) =\frac{1}{2^{n-1}}\sum_{S\subseteq N, i\in S} v(S).
$$
Observe that the two properties above are similar. Both establish that a necessary player must receive the average of the values of the coalitions to which that player belongs, although the former takes into account the size of such coalitions and the latter does not.

\begin{theorem}
	\label{teoremauno}
(Alonso-Meijide et al., 2019a).\\
1. The Shapley value is the unique value for cooperative games that satisfies the properties of additivity, null player and necessary players get the weighted mean.\\
2. The Banzhaf value is the unique value for cooperative games that satisfies the properties of additivity, null player and necessary players get the mean.
\end{theorem}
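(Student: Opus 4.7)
The plan is to prove both parts by the same template, differing only in which necessary-player property does the work. For sufficiency, I would first verify that the Shapley value $\varphi$ satisfies additivity, null player, and the weighted-mean property, and that the Banzhaf value $\beta$ satisfies the analogous three. Additivity and the null-player axiom are classical for both values; the necessary-player properties follow from the simple observation that if $i$ is necessary in $(N,v)$, then $v(S\cup\{i\})-v(S)=v(S\cup\{i\})$ for every $S\subseteq N\setminus\{i\}$, so the marginal-contribution sums defining $\varphi_i$ and $\beta_i$ collapse to the required averages over coalitions containing $i$.

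For uniqueness, the plan is the standard unanimity-basis argument. Given any real scalar $c$ and any nonempty $T\subseteq N$, consider the scaled unanimity game $cu_T$, where $u_T(S)=1$ if $T\subseteq S$ and $0$ otherwise. A one-line check shows that every $i\in T$ is a necessary player in $(N,cu_T)$ while every $j\notin T$ is a null player there. Hence, if $f$ satisfies the three axioms of part~1, then $f_j(N,cu_T)=0$ for $j\notin T$ by the null-player property, while for $i\in T$ the weighted-mean property fixes $f_i(N,cu_T)$ uniquely; since $\varphi$ also satisfies these axioms, this forces $f(N,cu_T)=\varphi(N,cu_T)$. The same reasoning, with the plain-mean property replacing the weighted-mean one, settles part~2 with $\beta$ in place of $\varphi$.

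To extend from scaled unanimity games to arbitrary $v\in\mathcal{G}_N$, I would invoke the classical fact that $\{u_T:\emptyset\neq T\subseteq N\}$ is a basis of $\mathcal{G}_N$, so $v$ admits a unique decomposition $v=\sum_{T\neq\emptyset}\lambda_T(v)\,u_T$. Separating the positive and negative coefficients yields an identity of the form
\[
v+\sum_{\lambda_T(v)<0}(-\lambda_T(v))\,u_T=\sum_{\lambda_T(v)>0}\lambda_T(v)\,u_T,
\]
and iterated additivity applied to both sides then gives
\[
f(N,v)=\sum_{T\neq\emptyset}f(N,\lambda_T(v)\,u_T),
\]
where every summand on the right has already been pinned down in the previous step.

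The only subtle point I anticipate is the last step. Additivity as stated is a two-game axiom, and one must resist the temptation to obtain $f(N,cu_T)$ from $f(N,u_T)$ by scaling, since additivity only delivers integer multiples, whereas the coefficients $\lambda_T(v)$ are real in general. The argument above sidesteps this because each $f(N,\lambda_T(v)\,u_T)$ is computed directly from the null-player and necessary-player axioms applied to the game $\lambda_T(v)\,u_T$ itself, never scaling through additivity; additivity is only used, after the splitting, to add finitely many games together. Once this is recognized, the rest of the argument is bookkeeping.
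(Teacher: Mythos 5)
Your proposal is correct and follows essentially the argument the paper itself relies on: the paper states Theorem~\ref{teoremauno} as a citation to Alonso-Meijide et al.\ (2019a) without reproducing a proof, but its own proofs of the coalitional analogues (Theorems~\ref{thbo} and~\ref{tho}) use exactly your scheme --- verify the axioms directly, then for uniqueness observe that in $(N,cu_T)$ the players outside $T$ are null and those in $T$ are necessary, and finish by additivity over the unanimity decomposition. Your cautionary remark about real scalars is well taken and is handled the same way in the paper, by applying the axioms directly to the scaled games $cu_T$ rather than scaling through additivity; the splitting into positive and negative coefficients is harmless but not needed, since iterated additivity applies to any finite sum of games regardless of the signs of the coefficients.
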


Now we remember two widely known properties for values that will be relevant in the subsequent discussion.

\vspace*{0.25cm}

\noindent
\textbf{Efficiency.} A value for cooperative games $f$ satisfies the property of efficiency if for all cooperative game $\left( N,v\right)$, it holds that 
$$
\sum\limits_{i\in N}f_{i}\left( N,v\right) =v\left( N\right) .
$$

We say that players $i,j\in N$ are symmetric in $\left( N,v\right)\in{\cal G}_N$ if $v(S\cup \{i\})=v(S\cup \{j\})$ for every $S\subseteq N\setminus \left\{ i,j\right\} $.

\vspace*{0.25cm}

\noindent
\textbf{Symmetry.} A value for cooperative games $f$ satisfies the property of symmetry if for all cooperative game $\left( N,v\right) $ and for all $i,j\in N$ symmetric players in $\left( N,v\right) $, it holds that 
$$
f_{i}\left( N,v\right) =f_{j}\left( N,v\right) .
$$

It is well-known that the Shapley and Banzhaf values satisfy the symmetry property. However, only the Shapley value is efficient. In some problems, efficiency is not an essential property for a value, see for example microarray games in Lucchetti et al. (2010). In many cases, however, efficiency will be required for a value to make sense; this happens, for example, when we are faced with cost allocation problems. One question we can ask is whether there is a value that fulfills the necessary players get the mean property and the efficiency property. The answer is negative because those properties are incompatible. Indeed, assume that a value for cooperative games $f$ satisfies both properties and for every non-empty $S\subseteq N$ denote by $(N,e_S)$ the cooperative game in ${\cal G}_N$ given, for every $T\subseteq N$, by:
	\begin{equation}
	\label{basecanonica}
e_S(T)=\left\{
	\begin{array}{cc}
	1&\mbox{if $T=S$,}\\
	0&\mbox{otherwise.}\\
	\end{array}
		\right.
		\end{equation}
Since $f$ satisfies efficiency, it holds that
\begin{equation}
\label{inc1}
\sum\limits_{i\in N}f_{i}\left( N,e_N\right) =1.
\end{equation}
Notice now that every $i\in N$ is necessary in $( N,e_N)$ and then, since $f$ satisfies the necessary players get the mean property, it holds that
\begin{equation}
\label{inc2}
\sum\limits_{i\in N}f_{i}\left( N,e_N\right) =\sum\limits_{i\in N}\frac{1}{2^{n-1}}=\frac{n}{2^{n-1}}.
\end{equation}
Observe that (\ref{inc1}) and (\ref{inc2}) are incompatible for $n>2$, which implies that necessary players get the mean and efficiency are incompatible properties. Such incompatibility vanishes when we consider the next weak version of the former property.

\vspace*{0.25cm}

\noindent
\textbf{(Weak) Necessary Players Get the Mean.}  A value for cooperative games $f$ satisfies the (weak) necessary players get the mean property if, for all cooperative game $\left( N,v\right)$ with $v(N)=0$ and for all $i\in N$ necessary player in $\left( N,v\right) $, it holds that 
$$
f_{i}\left( N,v\right) =\frac{1}{2^{n-1}}\sum_{S\subseteq N, i\in S} v(S).
$$
%
With this new property we can prove the following proposition.

\begin{theorem}
\label{teoremados}
There exists a unique value for cooperative games that satisfies the properties of additivity, (weak) necessary players get the mean, efficiency and symmetry. This value that we denote by $G$ is given, for all $(N,v)\in{\cal G}_N$ and all $i\in N$, by:
\begin{equation}
\label{G}
G _{i}\left( N,v\right) =\frac{1}{2^{n-1}}\left(\sum_{S\subset N, i\in S} v(S)-\sum_{S\subset N, i\not\in S} \frac{s}{n-s}v(S)\right)+\frac{v(N)}{n}.
\end{equation}
\end{theorem}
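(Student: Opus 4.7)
The plan is to prove existence and uniqueness separately. For \textbf{existence}, I would directly verify that the formula $G$ in (\ref{G}) satisfies all four axioms. Additivity is immediate from the linear dependence of the right-hand side on $v$. Efficiency follows by interchanging the order of summation: the coefficient of each $v(S)$ with $S \subsetneq N$ becomes $s - \frac{s}{n-s}(n-s) = 0$, while the $v(N)/n$ contributions sum to $v(N)$. Symmetry is verified through the bijection $S \mapsto (S \setminus \{i\}) \cup \{j\}$ between coalitions containing exactly one of $i$ and $j$; the symmetry hypothesis gives $v(S) = v((S \setminus \{i\}) \cup \{j\})$ for those $S$, and the coefficients of these $v$-values in $G_i$ and $G_j$ match under the bijection. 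Weak necessary players get the mean is obtained by noting that when $v(N) = 0$ and $i$ is necessary, all summands with $i \notin S$ in (\ref{G}) vanish, leaving precisely $\frac{1}{2^{n-1}} \sum_{S \ni i} v(S)$.

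For \textbf{uniqueness}, I use the fact that the canonical games $\{e_S : \emptyset \neq S \subseteq N\}$ introduced in (\ref{basecanonica}) form a basis of $\mathcal{G}_N$, so every $v$ decomposes as $v = \sum_{\emptyset \neq S \subseteq N} v(S)\, e_S$. Applying additivity to this finite decomposition gives $f(N,v) = \sum_{\emptyset \neq S \subseteq N} f(N, v(S)\, e_S)$, reducing the problem to determining $f(N, c\, e_S)$ for every real scalar $c$ and every nonempty $S \subseteq N$.

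The value $f(N, c\, e_S)$ is fixed by the remaining three axioms acting directly on the individual game $c\, e_S$, with no further appeal to additivity. When $S = N$, all players are symmetric in $(N, c\, e_N)$ and symmetry combined with efficiency gives $f_i = c/n$. When $S \subsetneq N$, the game $c\, e_S$ satisfies $v(N) = 0$, every player in $S$ is necessary, and every two players outside $S$ are symmetric; weak necessary players get the mean therefore forces $f_i = c/2^{n-1}$ for $i \in S$, and symmetry together with efficiency then forces $f_i = -sc/((n-s)\, 2^{n-1})$ for $i \notin S$. Substituting these determined values back into the decomposition and collecting terms by $S$ recovers the formula (\ref{G}) exactly, which proves uniqueness.

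The main obstacle is conceptual rather than computational: because only additivity (not full linearity) is assumed, one must be careful to invoke additivity only for the finite decomposition $v = \sum_S v(S)\,e_S$ and to use the other three axioms to pin down each scalar multiple $c\,e_S$ one game at a time. Once this separation is set up cleanly, everything reduces to the two sum-rearrangements underlying efficiency and symmetry of $G$, which require no cleverness beyond careful book-keeping of the coefficients of the $v(S)$.
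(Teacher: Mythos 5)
Your proposal is correct and follows essentially the same route as the paper: direct verification of the four axioms for existence, and for uniqueness the decomposition $v=\sum_{S\neq\emptyset}v(S)e_S$ combined with additivity, pinning down $f(N,c\,e_S)$ on each basis game via the remaining axioms. The only difference is that you carry out explicitly the computation of $f(N,c\,e_S)$ (obtaining $c/2^{n-1}$ for necessary players and $-sc/((n-s)2^{n-1})$ for the rest) where the paper merely asserts that the axioms determine a unique value on those games; this is a welcome amount of extra detail, not a different argument.
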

\begin{proof}
	(Existence). It is clear that $G$ satisfies additivity. To check that it satisfies the (weak) necessary players get the mean property, take a cooperative game $\left( N,v\right)$ with $v(N)=0$ and such that $i\in N$ is a necessary player in $\left( N,v\right) $. Then expression (\ref{G}) reduces to
	$$
	G_i(N,v)=\frac{1}{2^{n-1}}\sum_{S\subset N, i\in S}v(S) =\frac{1}{2^{n-1}}\sum_{S\subseteq N, i\in S} v(S).\\
	$$
	To check that $G$ satisfies efficiency notice that, for every cooperative game $(N,v)$,
	\begin{eqnarray}
	\nonumber
	\sum_{i\in N}G_i(N,v)&=&\frac{1}{2^{n-1}}\sum_{i\in N}\left(\sum_{S\subset N, i\in S} v(S)-\sum_{S\subset N, i\not\in S} \frac{s}{n-s}v(S)\right)+v(N)\\
	\nonumber
	&=&\frac{1}{2^{n-1}}\left(\sum_{S\subset N}s v(S)-\sum_{S\subset N} (n-s)\frac{s}{n-s}v(S)\right)+v(N)\\
	\nonumber
	&=&v(N).
	\end{eqnarray}
	To check that $G$ satisfies symmetry take a cooperative game $(N,v)$ and a pair of symmetric players in $(N,v)$ $i,j\in N$.
	Notice that
	{\footnotesize
	\begin{eqnarray}
	\nonumber
 \sum_{S\subset N, i\in S} v(S)-\sum_{S\subset N, i\not\in S} \frac{s}{n-s}v(S)&=&\sum_{S\subseteq N\setminus \{ i,j\}}\left(  v(S\cup \{i\})\right)+\sum_{S\subset N\setminus \{ i,j\}}\left(  v(S\cup \{i, j\})\right)\\
\nonumber
	&&-\sum_{S\subseteq N\setminus\{ i,j\}}\left( \frac{s}{n-s}v(S)+ \frac{s+1}{n-s-1}v(S\cup \{j\})\right).
\end{eqnarray}
}
%
	\noindent
	Now, since $i,j$ are symmetric in $(N,v)$, the last expression is equal to
	
	{\scriptsize
		$$\sum_{S\subseteq N\setminus \{ i,j\}}\left(  v(S\cup \{j\})\right)+\sum_{S\subset N\setminus \{ i,j\}}\left(  v(S\cup \{i, j\})\right)-\sum_{S\subseteq N\setminus\{ i,j\}}\left( \frac{s}{n-s}v(S)+ \frac{s+1}{n-s-1}v(S\cup \{i\})\right)$$}
	
	\noindent
	and then it is clear that $G_i(N,v)=G_j(N,v)$.
	
	\noindent (Uniqueness). Take $f$, a value for cooperative games that satisfies efficiency, symmetry, (weak) necessary players get the mean and additivity and take a cooperative game $(N,v)$. We prove now that $f(N,v)=G (N,v)$. Indeed, consider the canonical basis of the vector space of characteristic functions of cooperative games with set of players $N$: $\{ e_S\}_{S\in 2^N\setminus\emptyset}$ (see expression (\ref{basecanonica})). Observe that $v$ can be written in a unique way as a linear combination of the elements of the canonical basis: $v=\sum_{S\in 2^N\setminus\emptyset} v(S) e_S$. Since $f$ satisfies additivity,
	$$f(N,v)=\sum_{S\in 2^N\setminus\emptyset} f(N,v(S)e_S).$$
	Note that efficiency, symmetry and (weak) necessary players get the mean characterize a unique value in the class $\{ (N,v(S)e_S)\ |\ S\subset N, S\neq\emptyset\}$. Besides, efficiency and symmetry characterize a unique value for $(N,v(N)e_N)$.  Hence $f(N,v)=G (N,v)$.
\end{proof}

Surprisingly enough, the new value $G$ introduced in Proposition \ref{teoremados} looks a lot like the e-Banzhaf value defined in Alonso-Meijide et al. (2019b) but it is not the same, because $\frac{n-s}{s}$ is changed by $\frac{s}{n-s}$ and, moreover, those two parameters do not multiply the same summands in the expressions of $G$ and of the e-Banzhaf value. Indeed, such parameters do not seem to have a clear interpretation from the point of view of fairness, which leads us to think that perhaps the (weak) necessary players get the mean property should be reformulated. 
In fact, it is more reasonable to ask that a necessary player be entitled to the average of the per capita values of the coalitions that contain it rather than the average of the values of those coalitions; in fact, such players are necessary for coalitions to have a value other than zero, but they require the other coalition members to generate such a value. Thus we propose the new property formulated below.

\vspace*{0.25cm}

\noindent
%
\textbf{Necessary Players Get the Per Capita Mean.}  A value for cooperative games $f$ satisfies the necessary players get the per capita mean property if, for all cooperative game $\left( N,v\right)$ with $v(N)=0$ and for all $i\in N$ necessary player in $(N,v)$, it holds that 
$$
f_{i}\left( N,v\right) =\frac{1}{2^{n-1}}\sum_{S\subseteq N, i\in S}\frac{v(S)}{s}.
$$

\vspace*{0.25cm}

The next result introduces and characterizes a new value for cooperative games.

\begin{theorem}
	\label{nuevo valor}
	There exists a unique value for cooperative games that satisfies the properties of additivity, necessary players get the per capita mean, efficiency and symmetry. This value that we denote by $\gamma$ is given, for all $(N,v)\in{\cal G}_N$ and all $i\in N$, by:
	\begin{equation}
	\label{Gamma}
	\gamma_{i}\left( N,v\right) =\frac{1}{2^{n-1}}\left(\sum_{S\subset N, i\in S} \frac{v(S)}{s}-\sum_{S\subset N, i\not\in S} \frac{v(S)}{n-s}\right)+\frac{v(N)}{n}.
	\end{equation}
\end{theorem}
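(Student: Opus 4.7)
The plan is to mirror the structure of the proof of Theorem \ref{teoremados}, since the only substantive change is that the weights $1$ and $s/(n-s)$ inside the parentheses of (\ref{G}) are replaced by $1/s$ and $1/(n-s)$, so the same bookkeeping arguments should go through.

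For existence, I would verify the four properties of $\gamma$ in turn. Additivity is immediate from the fact that (\ref{Gamma}) is linear in $v$. For the per capita mean property, I would take $(N,v)$ with $v(N)=0$ and $i\in N$ a necessary player; then $v(S)=0$ for every $S\subseteq N\setminus\{i\}$, so the second sum in (\ref{Gamma}) vanishes and the $v(N)/n$ term also vanishes, leaving exactly $\frac{1}{2^{n-1}}\sum_{S\subseteq N, i\in S} v(S)/s$. For efficiency, I would swap the order of summation: for each fixed $S\subsetneq N$, the first double sum contributes $\sum_{i\in S} v(S)/s=v(S)$ and the second contributes $\sum_{i\notin S}v(S)/(n-s)=v(S)$, so they cancel, leaving $\sum_{i\in N} v(N)/n=v(N)$. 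For symmetry I would adapt the rearrangement in the proof of Theorem \ref{teoremados}: splitting the sums in (\ref{Gamma}) according to whether $S$ contains $j$ or not and using the symmetry relations $v(S\cup\{i\})=v(S\cup\{j\})$ and $v(S\cup\{i,j\})=v(S\cup\{i,j\})$ for $S\subseteq N\setminus\{i,j\}$, I should obtain an expression that is invariant under swapping $i$ and $j$.

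For uniqueness, I would again exploit the canonical basis $\{e_S\}_{S\in 2^N\setminus\{\emptyset\}}$ defined in (\ref{basecanonica}). Given a value $f$ satisfying the four properties and any $(N,v)\in\mathcal{G}_N$, additivity gives $f(N,v)=\sum_{S\in 2^N\setminus\{\emptyset\}} f(N,v(S)e_S)$. For $S\subsetneq N$, every $i\in S$ is necessary in $(N,v(S)e_S)$ and $(v(S)e_S)(N)=0$, so the per capita mean property pins down $f_i(N,v(S)e_S)$ for every $i\in S$; symmetry forces the players in $N\setminus S$ to share a common value, and efficiency then determines that common value. For $S=N$, all players are symmetric and efficiency fixes $f_i(N,v(N)e_N)=v(N)/n$. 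Hence $f$ is uniquely determined and must coincide with $\gamma$.

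I do not expect any real obstacle: the check of efficiency and of the (per capita) necessary property is almost mechanical, and the uniqueness argument is identical in spirit to the one already used for $G$. The only step that requires mild care is the symmetry verification, where one has to pair up the contributions from coalitions containing exactly one of $i,j$ against those containing exactly the other; this is the same combinatorial manipulation displayed in Theorem \ref{teoremados}, adapted to the new weights.
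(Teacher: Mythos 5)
Your proposal is correct and follows essentially the same route as the paper's own proof: the same four verifications for existence (with the identical cancellation argument for efficiency and the same pairing of coalitions containing exactly one of $i,j$ for symmetry) and the same uniqueness argument via additivity over the canonical basis $\{e_S\}$, where you in fact spell out more explicitly than the paper does how the per capita mean property, symmetry and efficiency pin down the value on each $(N,v(S)e_S)$. The only blemish is the tautological relation ``$v(S\cup\{i,j\})=v(S\cup\{i,j\})$'' in your symmetry step, which is evidently a typo and harmless.
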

\begin{proof}
		(Existence). It is clear that $\gamma$ satisfies additivity. To check that it satisfies the necessary players get the per capita mean property take a cooperative game $\left( N,v\right)$ with $v(N)=0$ and such that $i\in N$ is a necessary player in $\left( N,v\right) $. Then expression (\ref{Gamma}) reduces to
		$$
		\gamma_i(N,v)=\frac{1}{2^{n-1}}\sum_{S\subset N, i\in S}\frac{v(S)}{s}=\frac{1}{2^{n-1}}\sum_{S\subseteq N, i\in S}\frac{v(S)}{s}.\\
		$$
		To check that $\gamma$ satisfies efficiency notice that, for every cooperative game $(N,v)$,
		\begin{eqnarray}
		\nonumber
		\sum_{i\in N}\gamma_i(N,v)&=&\frac{1}{2^{n-1}}\sum_{i\in N}\left(\sum_{S\subset N, i\in S} \frac{v(S)}{s}-\sum_{S\subset N, i\not\in S} \frac{v(S)}{n-s}\right)+v(N)\\
		\nonumber
		&=&\frac{1}{2^{n-1}}\left(\sum_{S\subset N}s \frac{v(S)}{s}-\sum_{S\subset N} (n-s)\frac{v(S)}{n-s}\right)+v(N)\\
		\nonumber
		&=&v(N).
		\end{eqnarray}
		To check that $\gamma$ satisfies symmetry take a cooperative game $(N,v)$ and a pair of symmetric players in $(N,v)$ $i,j\in N$.
		Notice that
		
		$$\sum_{S\subset N, i\in S} \frac{v(S)}{s}-\sum_{S\subset N, i\not\in S} \frac{v(S)}{n-s}=$$
		{\footnotesize
			$$\sum_{S\subseteq N\setminus \{ i,j\}}  \frac{v(S\cup \{i\})}{s+1}+\sum_{S\subset N\setminus \{ i,j\}}\frac{v(S\cup \{i, j\})}{s+2}-\sum_{S\subseteq N\setminus\{ i,j\}}\left( \frac{v(S)}{n-s}+ \frac{v(S\cup \{j\})}{n-s-1}\right).$$}
		
		%
		\noindent
		Now, since $i,j$ are symmetric in $(N,v)$, the last expression is equal to
		
			{\footnotesize
			$$\sum_{S\subseteq N\setminus \{ i,j\}}  \frac{v(S\cup \{j\})}{s+1}+\sum_{S\subset N\setminus \{ i,j\}}\frac{v(S\cup \{i, j\})}{s+2}-\sum_{S\subseteq N\setminus\{ i,j\}}\left( \frac{v(S)}{n-s}+ \frac{v(S\cup \{i\})}{n-s-1}\right)$$}
		
		\noindent
		and then it is clear that $\gamma_i(N,v)=\gamma_j(N,v)$.
		
		\noindent (Uniqueness). Take $f$, a value for cooperative games that satisfies efficiency, symmetry, necessary players get the per capita mean and additivity and take a cooperative game $(N,v)$. We prove now that $f(N,v)=\gamma (N,v)$. Indeed, consider the basis of the vector space of characteristic functions of  cooperative games with set of players $N$: $\{ e_S\}_{S\in 2^N\setminus\emptyset}$ (see expression (\ref{basecanonica})). Observe that  $v$ can be written in a unique way as a linear combination of the elements of the basis: $v=\sum_{S\in 2^N\setminus\emptyset} v(S) e_S$. Since $f$ satisfies additivity,
		$$f(N,v)=\sum_{S\in 2^N\setminus\emptyset} f(N,v(S)e_S).$$
		Notice that efficiency, symmetry and necessary players get the per capita mean characterize a unique value in the class of games $\{ (N,v(S)e_S)\ |\ S\subset N, S\neq\emptyset\}$. Besides, efficiency and symmetry characterize a unique value for $(N,v(N)e_N)$.  Hence $f(N,v)=\gamma (N,v)$.
\end{proof}

A very desirable property for values for cooperative games is the invariance to S-equivalence, which we remember below. Two cooperative games with the same sets of players $(N,v)$ and $(N,w)$ are said to be $S$-equivalent if there exist $a\in\mathbb{R}$ with $a>0$ and $b\in\mathbb{R}^N$ such that, for every $T\subseteq N$, it holds that
$$w(T)=av(T)+\sum_{j\in T}b_j.$$
When $(N,v)$ and $(N,w)$ are $S$-equivalent we can transform $v$ into $w$ simply by changing the scale and translating the players' utilities. In these conditions it seems reasonable to ask a value for cooperative games $f$ that $f(N,v)$ is transformed into $f(N,w)$ by doing the corresponding change of scale and translations. 

\vspace*{0.25cm}

\noindent \textbf{Invariance to $S$-equivalence (INV)}$\mathbf{.}$ A value for  cooperative games $f$ satisfies invariance to S-equivalence if for any pair of S-equivalent  cooperative games $(N,v)$ and $(N,w)$ such that $w(T)=av(T)+\sum_{j\in T}b_j$ for all $T\subseteq N$ (with $a\in\mathbb{R}$, $a>0$ and $b\in\mathbb{R}^N$) it holds that, for every $i\in N$,
\begin{equation*}
f_i(N,w)=af_i(N,v)+b_i.
\end{equation*}

Unfortunately, the value $\gamma$ defined by (\ref{Gamma}) is not invariant to $S$-equivalence. Then, we make an adjustment of $\gamma$ that leads us to the $\Gamma$ value for cooperative games that we define below.

\begin{definition} 
The $\Gamma$ value for cooperative games is given for every $(N,v)\in{\cal G}_N$ and every $i\in N$ by:
\begin{equation}
\label{Gammacap}
\Gamma_i(N,v)=v(\{i\})+\gamma_i(N,v^0),
\end{equation}
where $v^0(S)=v(S)-\sum_{j\in S}v(\{j\})$ for all $S\subseteq N$.
\end{definition}

It is easy to check that $\Gamma$ satisfies the invariance to $S$-equivalence. In order to characterize it, we introduce below a new property concerning the necessary players.

\vspace*{0.25cm}

\noindent 
\textbf{Necessary Players Get the $0$-Normalized Per Capita Mean.}  A value for cooperative games $f$ satisfies the necessary players get the $0$-normalized per capita mean property if, for all cooperative game $\left( N,v\right)$ with $v(N)=\sum_{j\in N}v(\{j\})$ and for all $i\in N$ necessary player in $(N,v)$, it holds that 
$$
f_{i}\left( N,v\right) =v(\{i\})+\frac{1}{2^{n-1}}\sum_{S\subseteq N, i\in S}\frac{v^0(S)}{s}.
$$

\begin{theorem}
	\label{nuevo valor2}
	$\Gamma$ is the unique value for cooperative games that satisfies the properties of additivity, necessary players get the  $0$-normalized per capita mean, efficiency and symmetry. 
\end{theorem}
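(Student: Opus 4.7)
The plan is to mirror the structure of the proof of Theorem \ref{nuevo valor}: first check the four axioms for $\Gamma$, and then show that any other value $f$ satisfying them must agree with $\Gamma$, by additivity and the decomposition $v = v^0 + w$, where $w(S) = \sum_{j\in S} v(\{j\})$ is the additive game carrying the singleton values. Existence is mostly bookkeeping: additivity of $\Gamma$ follows from linearity of $v \mapsto v^0$ and additivity of $\gamma$; efficiency gives $\sum_i \Gamma_i(N,v) = \sum_j v(\{j\}) + v^0(N) = v(N)$; and symmetry transfers because symmetric players share their singleton values and remain symmetric in $(N,v^0)$. For the 0-normalized per capita mean property the key observation is that the hypothesis $v(N) = \sum_j v(\{j\})$ together with $i$ necessary in $(N,v)$ forces $v(\{j\}) = 0$ for $j \neq i$, so $v^0(N) = 0$ and $i$ is still necessary in $(N, v^0)$; then Theorem \ref{nuevo valor} applied to $\gamma(N, v^0)$ delivers the formula in the statement.

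For uniqueness, I would fix $f$ satisfying the four axioms and, for each $(N,v)$, split $f(N,v) = f(N,v^0) + f(N,w)$ by additivity, aiming to identify each summand. To handle the additive summand, I decompose $w = \sum_j v(\{j\})\, a_j$, where $a_j(S) = 1$ if $j \in S$ and $0$ otherwise, and analyse each game $b a_j$ in isolation. The observation I would exploit is that $b a_j(N) = b = \sum_k b a_j(\{k\})$, so the NP axiom applies; $j$ is the unique necessary player and $(b a_j)^0 \equiv 0$, so the axiom collapses to $f_j(N, b a_j) = b$. Symmetry among the remaining players together with efficiency then forces $f_k(N, b a_j) = 0$ for $k \neq j$, and a further application of additivity yields $f_i(N, w) = v(\{i\})$ for all $i$.

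For the 0-normalized summand I expand $v^0 = \sum_{|S| \geq 2} v^0(S)\, e_S$, since the singleton coefficients of $v^0$ vanish by construction, and apply additivity again: for $2 \leq |S| < n$ the NP hypothesis holds on $c\, e_S$, the necessary players are exactly those in $S$, and the axiom together with symmetry and efficiency determine $f(N, c\, e_S)$ to coincide with $\gamma(N, c\, e_S)$, exactly as in the uniqueness argument of Theorem \ref{nuevo valor}; for $S = N$, symmetry and efficiency alone force $f_i(N, c\, e_N) = c/n$. Combining gives $f(N, v^0) = \gamma(N, v^0)$, hence $f(N, v) = (v(\{i\}))_{i \in N} + \gamma(N, v^0) = \Gamma(N, v)$. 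The main obstacle is the additive-game step: the axiom list contains neither a dummy nor a null-player property, so the substantive point is to realise that the NP axiom itself does that job on the games $b a_j$, courtesy of the coincidence $b a_j(N) = \sum_k b a_j(\{k\})$; everything else is a direct adaptation of the argument of Theorem \ref{nuevo valor}.
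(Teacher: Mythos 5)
Your proof is correct, and the overall template (verify the axioms, then use additivity to reduce uniqueness to a family of games on which efficiency, symmetry and the necessary-player axiom pin the value down) is the same as the paper's; the difference lies in which family of games you reduce to. The paper works with the single adapted basis $\{ e_{\{i\}}+e_N \mid i\in N\}\cup\{ e_S \mid |S|\geq 2\}$, where each $c(e_{\{i\}}+e_N)$ satisfies $v(N)=\sum_j v(\{j\})$ and has $i$ as a necessary player, so the three non-additivity axioms determine the value on every basis game at once. You instead split $v=v^0+w$ and decompose the additive part as $w=\sum_j v(\{j\})a_j$, which amounts to using the basis $\{a_j\mid j\in N\}\cup\{e_S\mid |S|\geq 2\}$; your key observation, that $ba_j(N)=\sum_k ba_j(\{k\})$ so the weak necessary-player axiom applies to $ba_j$ with $(ba_j)^0\equiv 0$ and collapses to $f_j(N,ba_j)=b$, plays exactly the role that the games $e_{\{i\}}+e_N$ play in the paper. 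Both choices are designed to circumvent the same obstruction, namely that the axiom says nothing about $c\,e_{\{i\}}$ itself since that game violates $v(N)=\sum_j v(\{j\})$ for $n\geq 2$. Your route has the mild advantage of isolating the invariance-style content ($f_i(N,w)=v(\{i\})$ on additive games follows from the axioms) and of reusing the uniqueness computation of Theorem \ref{nuevo valor} verbatim on the games $c\,e_S$, $|S|\geq 2$; the paper's single-basis argument is more compact. Your existence check is also slightly more explicit than the paper's in noting that $i$ necessary and $v(N)=\sum_j v(\{j\})$ force $v(\{j\})=0$ for $j\neq i$, hence $v^0(N)=0$ and $i$ remains necessary in $(N,v^0)$, which is the point that lets Theorem \ref{nuevo valor} be applied to $\gamma_i(N,v^0)$.
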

\begin{proof}
	(Existence). Since $\gamma$ satisfies additivity, efficiency and symmetry, it is clear that $\Gamma$ also satisfies those properties. To check that it fulfils the necessary players get the  $0$-normalized per capita mean property  take a cooperative game $\left( N,v\right)$ with $v(N)=\sum_{j\in N}v(\{j\})$ and such that $i\in N$ is a necessary player in $\left( N,v\right) $. Then expression (\ref{Gammacap}) reduces to
	$$
	\Gamma_i(N,v)=v(\{i\})+\frac{1}{2^{n-1}}\sum_{S\subset N, i\in S}\frac{v^0(S)}{s}=v(\{i\})+\frac{1}{2^{n-1}}\sum_{S\subseteq N, i\in S}\frac{v^0(S)}{s}.\\
	$$
	\noindent (Uniqueness). Take $f$ a value for cooperative games that satisfies efficiency, symmetry, necessary players get the  $0$-normalized per capita mean  and additivity and take a cooperative game $(N,v)$. We prove now that $f(N,v)=\Gamma (N,v)$. Indeed, consider the basis of the vector space of characteristic functions of cooperative games with set of players $N$ given by: 
	$$ \{ e_{\{i\}}+e_N\ |\ i\in N\}\cup \{e_S\ |\ S\in 2^N,|S|\geq 2\}.$$
	Observe that  $v$ can be written in a unique way as a linear combination of the elements of this basis. Since $f$ satisfies additivity and, moreover, the properties of efficiency, symmetry and necessary players get the  $0$-normalized per capita mean  characterize a unique value in the games of the basis, the proof is concluded.
\end{proof}

Now we analyse an example in order to make some comments on the $\Gamma$ value. It is based on a similar example in Alonso-Meijide et al. (2020).

\begin{example}
Consider a three-storey building with one apartment on each floor, the three apartments having the same surface. The three corresponding owners have agreed to install an elevator and share the corresponding cost. Such a cost is 120 (in
thousands of euros), 50 of which correspond to the machine, 40 to the works to make the hollow of the elevator (a fixed cost of 10 plus a cost of 10 for the owner of the apartment in the first floor that is incremented by 10 for the owner of the apartment in the second
floor and by an additional 10 for the owner of the apartment in the third floor), and 30 to the works to be done on each floor to allow access to the elevator (10 in each of them). According
to this, the cost $c(i)$ in which each player is involved is:

\begin{itemize}
	\item 50 (machine) + 10 (floor) + 20 (hollow) = 80 for $i=1$, the player of the
	first floor,
	
	\item 50 (machine) + 10 (floor) + 30 (hollow) = 90 for $i=2$, the player of the
	second floor,
	
	\item 50 (machine) + 10 (floor) + 40 (hollow) = 100 for $i=3$, the player of the
	third floor.
\end{itemize}
The rest of the corresponding cost game is given by: $c(\{ 1,2\})=100$, $c(\{ 1,3\})=c(\{ 2,3\})=110$, $c(N)=120$. Table \ref{table1} below shows the distribution of costs for each of the apartments
according to the Egalitarian value, the Shapley value and $\Gamma$. In European city centres it is common to find buildings coping with situations like the one described in this example. It is not uncommon for the owners of the lower floors to be less favourable to installing an elevator because of the costs involved. According to Spanish legislation, when owners decide to make an investment in the common elements of a building, the corresponding costs will be distributed in proportion to the owners' shares (which, in turn, sometimes depend only on the surface areas of the apartments). Therefore, the distribution due to the Egalitarian value will be the one proposed by the legislation in some occasions. Note that the proposed Shapley value and $\Gamma$ distributions tend to favour the owners of the lower floors. In short, $\Gamma$ seems to be the least controversial distribution in view of the usual dynamics of homeowners' communities, because it tends to favour the owners of the lowest floor, who are usually the most reluctant to bear the costs of installing an elevator.

\vspace*{0.5cm}

\begin{table}[htbp]
	\begin{center}
		\begin{tabular}{|c|c|c|c|}
			\hline
			& Egalitarian & Shapley &$\Gamma$\\ \hline
			$1$ & 40 & 33.3333 & 32.5 \\ \hline
			$2$ & 40 & 38.3333 & 38.75 \\ \hline
			$3$ & 40 & 48.3333 & 48.75 \\ \hline
		\end{tabular}%
	\end{center}
	\caption{The Egalitarian value, the Shapley value and $\Gamma$ for $(N,c)$}
	\label{table1}
\end{table}
It is not uncommon that in real situations such as those described in this example not all the owners are in favour of the elevator. When this occurs, sometimes the elevator will not be installed immediately even if the owners in favour of it have a majority. The reason for this is that the unfavourable owners (generally those on the lower floors) may refuse to pay the financial amounts due to them and the owners' community can only force them to do so by initiating legal proceedings which may be long, economically costly and which, moreover, may profoundly damage coexistence in the building. The practical consequence of this is that negotiations often take place within the owners' community to try to ensure that the installation of the elevator is possible without damaging coexistence in the building. One possible solution is that the owners not in favour of the elevator give up its service; this means that the elevator will not have stops on the corresponding floors, so that the works to give access to the elevator on those floors will not be necessary and the total cost of the installation will be lower. Assume, for instance that in the three-storey building in this example the owner of the apartment in the first floor is not in favour to install the elevator and, moreover, declares that he will not pay any costs unless a court decision obliges him to do so. Negotiation in the community may propose that the elevator does not serve the first floor. In that case, the cost $d(i)$ in which each player is involved is:

\begin{itemize}
	\item 0 for $i=1$, the player of the first floor,
	
	\item 50 (machine) + 10 (floor) + 30 (hollow) = 90 for $i=2$, the player of the
	second floor,
	
	\item 50 (machine) + 10 (floor) + 40 (hollow) = 100 for $i=3$, the player of the
	third floor.
\end{itemize}
The rest of the corresponding cost game is given by: $d(\{ 1,2\})=90$, $d(\{ 1,3\})=100$, $d(\{ 2,3\})=110$, $d(N)=110$. Table \ref{table2} below shows the distribution of costs for each of the apartments according to the Egalitarian value, the Shapley value and $\Gamma$. Note that the distribution given by the Egalitarian rule does not seem to facilitate the agreement on the installation of the elevator because the owner of the first floor will continue to pay a considerable amount and, in addition, will give up the service of the elevator. The distributions given by the Shapley value and by $\Gamma$, however, do seem to facilitate a final settlement. According to the Shapley value, the owner of the first floor will waive elevator service but pay nothing in return. According to $\Gamma$, the owner of the first floor will even receive a small compensation for the inconvenience caused to him by the works and the installation.

\vspace*{0.5cm}

\begin{table}[htbp]
	\begin{center}
		\begin{tabular}{|c|c|c|c|}
			\hline
			& Egalitarian & Shapley &$\Gamma$\\ \hline
			$1$ & 36.6666 & 0 & -6.6666 \\ \hline
			$2$ & 36.6666 & 50 & 53.3333 \\ \hline
			$3$ & 36.6666 & 60 & 63.3333 \\ \hline
		\end{tabular}%
	\end{center}
	\caption{The Egalitarian value, the Shapley value and $\Gamma$ for $(N,d)$}
	\label{table2}
\end{table}

\end{example}

\section{Coalitional values and necessary players}	
\label{seccion3}
	In this section we extend the $\Gamma$ value to cooperative games with a coalition structure. We start by remembering the mean features concerning that model.
	
	We  denote by $P(N)$ the set of all partitions of a finite set $N$. Each $P\in P(N)$, of the form $P=\{P_1,\dots,P_m\}$, is called a \textit{coalition structure} on $N$. We call unions of $P$ to its elements $P_1,\dots,P_m$. We denote by $M$ the set $\{1,...,m\}$.

	A \textit{cooperative game with a coalition structure} is a triple $(N,v,P)$ where $(N,v)\in\mathcal{G}_N$ and $P\in P(N)$.  $\mathcal{G}_N^{cs}$ denotes the family of all cooperative games with a coalition structure and with player set $N$. Note that the first two elements of a cooperative game with a coalition structure, $(N,v)$, characterize a cooperative game.
	
	By a \textit{coalitional value} we mean a map $g$ that assigns to every game with a coalition structure $(N,v,P)$ a vector $g(N,v,P)\in\mathbb{R}^N$ with components $g_i(N,v,P)$, $i\in N$. Two of the most important coalitional values are the Owen value (Owen, 1977) and the Banzhaf-Owen value (Owen, 1982). In a similar way to the Shapley and Banzhaf values, the value of a particular player is a weighted sum of his contributions. In the case of the Shapley and Banzhaf values all possible contributions are taken into account, but for the coalitional values only the contributions to some coalitions are used to compute the values.
	
	
The Owen value $\Phi$ is the  coalitional value defined by:
		\begin{equation*}
		 \Phi_i(N,v,P)=\frac{1}{m}\frac{1}{p_k}\sum_{R\subseteq M\backslash\{k\}} \sum_{T\subseteq
			P_k\backslash\{i\}}\frac{1}{{\binom{m-1}{r}}}\frac{1}{{\binom{p_k-1}{t}}} \Big[v(\bigcup\limits_{r\in R}P_r\cup T\cup\{i\})-v(\bigcup\limits_{r\in R}P_r\cup T)\Big]
		\end{equation*}
		for all $(N,v,P)\in\mathcal{G}_N^{cs}$ and all $i\in N$, where $P_k\in P$ is the union such that $i\in P_k$; $m$, $p_k$, $r$ and $t$ are the cardinalities of $M$, $P_k$, $R$ and $T$, respectively. 
		
		The \textit{Banzhaf--Owen value} $\Psi$ is the coalitional value defined as  
		\begin{equation*}
		\Psi_i(N,v,P)=\frac{1}{2^{m-1}}\frac{1}{2^{p_k-1}}\sum_{R\subseteq M\backslash\{k\}} \sum_{T\subseteq
			P_k\backslash\{i\}} \Big[v(\bigcup\limits_{r\in R}P_r\cup
		T\cup\{i\})-v(\bigcup\limits_{r\in R}P_r\cup T)\Big]
		\end{equation*}
		for all $(N,v,P)\in\mathcal{G}_N^{cs}$ and all $i\in N$, where $P_k\in P$ is the union such that $i\in P_k$; $m$, $p_k$, $r$ and $t$ are the cardinalities of $M$, $P_k$, $R$ and $T$, respectively.
%
%
%
%
	
	In the literature, we can find several characterizations of the Owen and the Banzhaf-Owen coalitional values; see for example Vázquez et al. (1997), Amer et al. (2002), Khmelnitskaya and Yanovskaya (2007), Alonso-Meijide et al. (2007), Casajus (2010) and Lorenzo-Freire (2016). We contribute to this research line providing a new characterization of these two coalitional values using necessary players. Only three properties are used in our results and the difference between them is the assigned payoff to necessary players.
	
%

	\bigskip 
	\noindent 
	\textbf{Necessary Players Get the Weighted Coalitional Mean}$\mathbf{.}$ A coalitional value $g$ satisfies the property of necessary players get the weighted coalitional mean if for any coalitional game $\left( N,v,P\right) $ and for any necessary player $i\in P_{k}$ in $(N,v)$, it holds that 

\begin{equation*}
g_{i}\left( N,v,P\right) =\frac{1}{m}\frac{1}{p_{k}}\sum_{R\subseteq
	M\backslash \{k\}}\sum_{T\subseteq P_{k}}\frac{1}{{\binom{m-1}{r}}}\frac{1}{{\binom{p_{k}-1}{t-1}}}v(\bigcup\limits_{r\in R}P_{r}\cup T).
\end{equation*}

	\bigskip
	
	\bigskip 
	\noindent 
	\textbf{Necessary Players Get the Coalitional Mean}$\mathbf{.}$ A coalitional value $g$ satisfies the property of necessary players get the coalitional mean if for any coalitional game $\left( N,v,P\right) $ and for any necessary player $i\in P_{k}$ in $(N,v)$, it holds that 
\begin{equation*}
g_{i}\left( N,v,P\right) =\frac{1}{2^{m-1}}\frac{1}{2^{p_{k}-1}}
\sum_{R\subseteq M\backslash \{k\}}\sum_{T\subseteq P_{k}}v(\bigcup\limits_{r\in
	R}P_{r}\cup T).
\end{equation*}
	
	\bigskip
	
	Both properties propose that a necessary player must receive the average worth over all coalitions that are compatible with the partitions (i.e., those that are formed by some complete unions and a subset of another union), but the first one takes into account the size of the coalitions while the second one assigns the same weight to all compatible coalitions. With these new properties we can prove the following results.
	
	\begin{theorem}
		The Banzhaf-Owen value is the unique coalitional value that satisfies the properties of additivity, null player and necessary players get the coalitional mean. \label{thbo}
	\end{theorem}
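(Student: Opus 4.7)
The plan is to mirror the template of the earlier characterization theorems in the paper: verify the three properties directly for $\Psi$ to obtain existence, and then use additivity on a well-chosen basis of $\mathcal{G}_N$ (with the partition $P$ held fixed throughout) to obtain uniqueness. Additivity of $\Psi$ is immediate from its definition as a linear functional of $v$. The null player property is equally transparent: when $i$ is null in $(N,v)$, every marginal contribution $v(\bigcup_{r\in R}P_r\cup T\cup\{i\})-v(\bigcup_{r\in R}P_r\cup T)$ in the Banzhaf-Owen formula vanishes. For the necessary players get the coalitional mean property, I would take a necessary $i\in P_k$; every subtracted term $v(\bigcup_{r\in R}P_r\cup T)$ in the formula is zero because that coalition does not contain $i$, and reindexing via $T'=T\cup\{i\}$ rewrites the remainder as a sum over $T'\subseteq P_k$ with $i\in T'$. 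The terms with $i\notin T'$ are again zero by necessity and can be inserted at no cost, producing exactly the expression stipulated by the property.

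For uniqueness, the key decision is the choice of basis. Rather than the canonical basis $\{e_S\}$ used in the preceding proofs of this section, I would use the unanimity basis $\{u_S:\emptyset\ne S\subseteq N\}$, where $u_S(T)=1$ if $T\supseteq S$ and $0$ otherwise. The decisive feature of this basis is that in each game $(N,\alpha u_S,P)$ every $i\in S$ is necessary (for $T\not\ni i$ one has $T\not\supseteq S$, hence $u_S(T)=0$) while every $i\notin S$ is null (since $i\notin S$ forces $u_S(T\cup\{i\})=u_S(T)$). Consequently the three hypothesized properties pin down every coordinate of $g(N,\alpha u_S,P)$: the necessary coordinates are fixed by the coalitional-mean property and the remaining coordinates vanish by the null player property. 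Writing any $v\in\mathcal{G}_N$ in its unique unanimity expansion $v=\sum_S \Delta_v(S)u_S$ and applying additivity then yields $g(N,v,P)=\Psi(N,v,P)$.

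The main technical step is the existence verification for the necessary players property, where one must carefully check that the Banzhaf-Owen formula, after cancellation of the zero marginal contributions and the reindexing $T\mapsto T\cup\{i\}$, coincides with the coalitional-mean expression in the statement. The main conceptual step is the switch to the unanimity basis: with the canonical basis, a non-$S$ player in $(N,\alpha e_S,P)$ is neither null nor necessary, so the same shortcut would not work and one would need to invoke additional properties. Here, by contrast, the nullity of non-$S$ members in unanimity games replaces the role played by symmetry and efficiency in the earlier characterizations, which is exactly why only three properties, and no symmetry or efficiency, are needed in this theorem.
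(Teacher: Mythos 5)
Your proposal is correct and follows essentially the same route as the paper: existence by dropping the vanishing subtracted terms and reindexing $T\mapsto T\cup\{i\}$, and uniqueness via the unanimity basis $\{u_S\}$, exploiting that in $(N,cu_S,P)$ every member of $S$ is necessary and every non-member is null, so additivity plus the two player properties pin down $g$. The switch to unanimity games that you present as a departure is in fact exactly what the paper does for this theorem (the canonical basis $\{e_S\}$ is only used in the Section 2 characterizations, where symmetry and efficiency are available).
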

	
	\begin{proof}
		(Existence). It is known that the Banzhaf-Owen value satisfies additivity and null player. Now let us see that it satisfies the property of necessary players get the coalitional mean. Take a cooperative game with a coalition structure $(N,v,P)$ and take $i\in P_k$ a necessary player in $(N,v)$. Then the Banzhaf-Owen value is reduced to 
		\begin{eqnarray*}
			\Psi_i(N,v,P)&=&\frac{1}{2^{m-1}}\frac{1}{2^{p_k-1}}\sum_{R\subseteq M\backslash\{k\}} \sum_{T\subseteq
				P_k\backslash\{i\}} \Big[v(\bigcup\limits_{r\in R}P_r\cup
			T\cup\{i\})-v(\bigcup\limits_{r\in R}P_r\cup T)\Big] \\
			&=&\frac{1}{2^{m-1}}\frac{1}{2^{p_k-1}}\sum_{R\subseteq M\backslash\{k\}} \sum_{T\subseteq P_k\backslash\{i\}} v(\bigcup\limits_{r\in R}P_r\cup T\cup\{i\}) \\
			&=&\frac{1}{2^{m-1}}\frac{1}{2^{p_k-1}}\sum_{R\subseteq M\backslash\{k\}} \sum_{T\subseteq P_k} v(\bigcup\limits_{r\in R}P_r\cup T). \\
		\end{eqnarray*}
		
		(Uniqueness). For every $S\subseteq N$, $S\neq \emptyset$, the unanimity game $(N,u_S)$ is given, for every $T\subseteq N$, by: 
			\begin{equation}
		\label{unanimidad}
		u_S(T)=\left\{
		\begin{array}{cc}
		1&\mbox{if $S\subseteq T$,}\\
		0&\mbox{otherwise.}\\
		\end{array}
		\right.
		\end{equation}
		Take a coalitional value $g$ that satisfies additivity, null player and necessary players get the coalitional mean and take a cooperative game with a coalition structure $(N,v,P)$. We prove now that $g(N,v,P)=	\Psi(N,v,P)$. Given $S\subseteq N$, in the unanimity game $(N,u_S)$ every $i\in S$  is a necessary player and every $i\in N\backslash S$ is a null player. Let us fix $P$, a finite set $S\subseteq N$ and $c\in \mathbb{R}$.  By additivity it is sufficient to prove that for all $i\in N$, $g_i(N,c u_{S},P)=\Psi_i(N,c u_{S},P)$. If $i\in N\backslash S$, applying the null player property $g_i(N,c u_{S},P)=\Psi_i(N,c u_{S},P)=0$. If $i\in S$, applying neccesary players get the coalitional mean, we have that 
		$$g_i(N,c
		u_{S},P)=\Psi_i(N,c u_{S},P)=c \frac{1}{2^{m-1}}\frac{1}{2^{p_{k}-1}} \sum_{R\subseteq M\backslash k}\sum_{T\subseteq P_{k}}u_S(\bigcup\limits_{r\in R}P_{r}\cup  T),$$ where $P_k$ is the union such that $i\in P_k$.
	\end{proof}
	
	\begin{theorem}
		The Owen value is the unique coalitional value that satisfies the properties of additivity, null player and necessary players get the weighted coalitional mean. \label{tho}
	\end{theorem}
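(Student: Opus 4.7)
The plan is to mirror the structure of the proof of Theorem~\ref{thbo}, replacing the uniform weights $\frac{1}{2^{m-1}2^{p_k-1}}$ of the Banzhaf--Owen value by the binomial weights $\frac{1}{m\,p_k\binom{m-1}{r}\binom{p_k-1}{t}}$ appearing in the Owen value. The argument then splits as usual into existence and uniqueness.

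For existence, additivity and the null player property for $\Phi$ are standard, so the only real verification is that $\Phi$ satisfies necessary players get the weighted coalitional mean. Fix a cooperative game with a coalition structure $(N,v,P)$ and a necessary player $i\in P_k$ in $(N,v)$. Every coalition not containing $i$ has value zero, so the subtractive term $v(\bigcup_{r\in R}P_r\cup T)$ in the defining formula of $\Phi_i(N,v,P)$ vanishes. I would then apply the substitution $T':=T\cup\{i\}$, which puts $T\subseteq P_k\setminus\{i\}$ in bijection with $\{T'\subseteq P_k:i\in T'\}$ and turns $\binom{p_k-1}{t}$ into $\binom{p_k-1}{t'-1}$. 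Finally, the $T'$-summation is extended to all $T'\subseteq P_k$ at no cost: the added terms have $i\notin\bigcup_{r\in R}P_r\cup T'$, so $v$ vanishes on them. The resulting expression is precisely the right-hand side of the property.

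For uniqueness, I would proceed exactly as in Theorem~\ref{thbo}. Take a coalitional value $g$ satisfying the three properties. In the unanimity game $(N,u_S)$ (defined by \eqref{unanimidad}) every $i\in S$ is necessary and every $i\in N\setminus S$ is null. Hence for any scalar $c\in\mathbb{R}$ and any partition $P$, the null player property pins down $g_i(N,cu_S,P)=0$ for $i\notin S$, while necessary players get the weighted coalitional mean pins down $g_i(N,cu_S,P)$ for $i\in S$, and the same identities hold for $\Phi$; thus $g(N,cu_S,P)=\Phi(N,cu_S,P)$. Since $\{u_S\}_{\emptyset\neq S\subseteq N}$ is a basis of the vector space of characteristic functions on $N$, additivity extends the equality coordinate-wise to every $(N,v,P)\in\mathcal{G}_N^{cs}$.

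The main obstacle is purely bookkeeping in the existence half: one has to track carefully how the binomial weight transforms under $T'=T\cup\{i\}$ and then justify that the terms added when enlarging the inner summation to all $T'\subseteq P_k$ truly contribute zero. This requires a small amount of care at the boundary $t'=0$, where $\binom{p_k-1}{-1}$ is interpreted as zero in accordance with the fact that $v$ vanishes on any coalition missing a necessary player. Once this reduction is in place, the uniqueness argument is a direct copy of the one used for the Banzhaf--Owen value.
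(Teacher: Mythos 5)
Your proposal is correct and follows essentially the same route as the paper: for existence, the subtracted terms vanish because $i$ is necessary, the reindexing $T'=T\cup\{i\}$ converts $\binom{p_k-1}{t}$ into $\binom{p_k-1}{t'-1}$, and the sum is harmlessly extended to all $T'\subseteq P_k$; for uniqueness, unanimity games plus the null player and necessary-player properties pin down $g$ on a basis, and additivity does the rest. Your remark about the boundary term $t'=0$ is a point the paper glosses over, but otherwise the two arguments coincide.
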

	
	\begin{proof}
		(Existence). It is known that the Owen value satisfies additivity and null player. Let us see that it satisfies the property of necessary players get the  weighted coalitional mean. Suppose that $i$ is a necessary player with $i\in P_k$; then the Owen value is  
		\begin{eqnarray*}
			\Phi_i(N,v,P)&=&\frac{1}{m}\frac{1}{p_k}\sum_{R\subseteq M\backslash\{k\}} \sum_{T\subseteq
				P_k\backslash\{i\}}\frac{1}{{\binom{m-1}{r}}}\frac{1%
			}{{\binom{p_k-1}{t}}} \Big[v(\bigcup\limits_{r\in R}P_r\cup T\cup\{i\})-v(\bigcup\limits_{r\in R}P_r\cup T)\Big] \\
			&=&\frac{1}{m}\frac{1}{p_k}\sum_{R\subseteq M\backslash\{k\}} \sum_{T\subseteq P_k\backslash\{i\}}%
			\frac{1}{{\binom{m-1}{r}}}\frac{1}{{\binom{p_k-1}{t}}%
			} v(\bigcup\limits_{r\in R}P_r\cup T\cup\{i\}) \\
			&=&\frac{1}{m}\frac{1}{p_k}\sum_{R\subseteq M\backslash\{k\}} \sum_{T\subseteq P_k}\frac{1}{{\binom{m-1}{r}}}\frac{1}{{\binom{p_k-1}{t-1}}} v(\bigcup\limits_{r\in R}P_r\cup T). \\
		\end{eqnarray*}
		
		(Uniqueness) Take a coalitional value $g$ that satisfies additivity, null player and necessary players get the weighted coalitional mean and take a cooperative game with a coalition structure $(N,v,P)$. We prove now that $g(N,v,P)=	\Phi(N,v,P)$. Given $S\subseteq N$, in the unanimity game $(N,u_{S})$ every $i\in S$ is a necessary player and every $i\in N\backslash S$ is a null player. Let us fix $P$, a finite set $S\subseteq N$ and $c\in \mathbb{R}$.  By additivity it is sufficient to prove that for all $i\in N$, $g_i(N,c u_{S},P)=\Phi_i(N,c u_{S},P)$. If $i\in N\backslash S$, applying the null player property 
		$$g_i(N,c u_{S},P)=	\Phi_i(N,c u_{S},P)=0.$$ If $i\in S$, applying necessary players get the weighted coalitional mean, we have that 
		$$g_i(N,c
		u_{S},P)=	\Phi_i(N,c u_{S},P)=c \frac{1}{2^{m-1}}\frac{1}{2^{p_{k}-1}} \sum_{R\subseteq
			M\backslash k}\sum_{T\subseteq P_{k}}u_S(\bigcup\limits_{r\in R}P_{r}\cup  T),$$ 
		where $P_k$ is the union such that $i\in P_k$.
	\end{proof}
	
	We are now willing to extend the $\Gamma$ value, defined in Section \ref{seccion2}, to cooperative games with a coalition structure.  We next remind some properties that are relevant for our aim.
	
	\vspace*{0.25cm}
	
	\noindent
	\textbf{Symmetry Inside Unions.} A coalitional value $g$ satisfies the property of symmetry inside unions if for all cooperative game with a coalition structure $\left( N,v, P\right)$, it holds that 
	$$
	g_{i}\left( N,v,P\right) =g_{j}\left( N,v,P\right) .
	$$
	for all $i,j$ symmetric players in $(N,v)$ with $i,j\in P_k$, $P_k\in P$.
	
	We say that unions $P_k, P_l\in P$ are symmetric in $\left( N,v,P\right)\in\mathcal{G}_N^{cs}$ if $v(S\cup P_k)=v(S\cup P_l)$, for every $S=\cup_{j\in R}P_j$ with $R\subseteq M\backslash\{k,l\}$.
	
	\vspace*{0.25cm}
	
	\noindent
	\textbf{Symmetry Among Unions.} A coalitional value $g$ satisfies the property of symmetry among unions if for all cooperative game with a coalition structure $\left( N,v, P\right)$, it holds that
	$$
	\sum_{i\in P_k}g_{i}\left( N,v,P\right) =\sum_{j\in P_r}g_{j}\left( N,v,P\right) 
	$$
	for all $P_k, P_r\in P$, symmetric unions in $(N,v,P)$.
%
%
	\vspace*{0.25cm}
	
	Given the properties of efficiency, additivity, symmetry inside unions and symmetry among unions one can expect to extend the $\Gamma$ value to cooperative games with a coalition structure and to characterize the new value using a property for necessary players that somewhat adapts the necessary players get the  $0$-normalized per capita mean property. First at all, let us see how to extend the $\gamma$ value, since the $\Gamma$ value depends on it.

	\begin{definition} 
		The $\gamma^C$ value for cooperative games with a coalition structure is given for every $(N,v,P)\in\mathcal{G}_N^{cs}$ and every $i\in P_k$ by:
		\begin{equation}
	\label{gamma coalicional}
	\begin{split}
	\gamma^C_{i}\left( N,v,P\right) &=\frac{1}{2^{m-1}}\frac{1}{2^{p_{k}-1}}\left(
	\sum_{R\subseteq M\backslash k}\sum_{T\subset P_{k}, i\in T}\frac{v(\bigcup\limits_{r\in
			R}P_{r}\cup T)}{t}\right.\\
		&\left.-\sum_{R\subseteq M\backslash k}\sum_{T\subset P_{k}, i\notin T, T\neq \emptyset}\frac{v(\bigcup\limits_{r\in
			R}P_{r}\cup T)}{p_k-t}\right)\\
	&+\frac{1}{2^{m-1}}\frac{1}{p_k}
	\left(\sum_{R\subset M, k\in R}\frac{v(\bigcup\limits_{r\in
			R}P_{r})}{r}-\sum_{R\subseteq M\backslash k}\frac{v(\bigcup\limits_{r\in
			R}P_{r})}{m-r}\right)+\frac{v(N)}{mp_k}.
	\end{split}
	\end{equation}
	\end{definition}
	
	Let us see that $\gamma^C$ is an reasonable extension of $\gamma$. To check it, we can see that $\gamma^C$ is a coalitional value of $\gamma$, that is $\gamma^C(N,v,P^n)=\gamma(N,v)$ for all $(N,v,P^n)\in\mathcal{G}_N^{cs}$ where $P^n=\{\{1\},...,\{n\}\}$. In fact
	
	\begin{equation*}
	\begin{split}
	\gamma^C_{i}\left( N,v,P^n\right) &=\frac{1}{2^{m-1}}
	\left(\sum_{R\subset M, k\in R}\frac{v(\bigcup\limits_{r\in
			R}P_{r})}{r}-\sum_{R\subseteq M\backslash k}\frac{v(\bigcup\limits_{r\in
			R}P_{r})}{m-r}\right)+\frac{v(N)}{mp_k}\\
	&=\frac{1}{2^{n-1}}
	\left(\sum_{S\subset N, i\in S}\frac{v(S)}{s}-\sum_{S\subseteq N\backslash \{i\}}\frac{v(S)}{n-s}\right)+\frac{v(N)}{n}=\gamma_{i}\left(N,v\right).
	\end{split}
	\end{equation*}
	
	The next lemma proves that $\gamma^C$ satisfies an interesting property for cooperative games with a coalition structure.
	
	\begin{lemma}
		The $\gamma^C$ value satisfies the quotient game property, i.e., that $$\sum_{i\in P_k}\gamma^C_{i}\left( N,v,P\right) =\gamma^C_{k}\left( M,v^P,P^{m}\right)$$ 
			for all $P_k\in P$, where $v^P(R)=v(\cup_{r\in R}P_r)$ for all $R\subseteq M$, and $P^m=\left\{ \left\{ 1\right\}  ,\ldots ,\left\{ m\right\} \right\}$.
	\end{lemma}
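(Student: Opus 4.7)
The plan is to compute the left-hand side by direct summation and recognize the resulting expression as the right-hand side evaluated in the discrete coalition structure $P^m$.

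First I would split $\gamma^C_i(N,v,P)$ into its five natural pieces: the two ``inner'' sums over $T\subset P_k$ (one with $i\in T$, the other with $i\notin T,\,T\neq\emptyset$), the two ``outer'' sums over $R$, and the constant term $v(N)/(m p_k)$. The only pieces that depend on $i$ are the two inner sums. The rest are constant in $i$, so when summed over the $p_k$ elements of $P_k$ they are multiplied by $p_k$, which exactly cancels the $1/p_k$ factor in their coefficients; likewise $v(N)/(m p_k)$ contributes $v(N)/m$.

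Next I would dispose of the two inner sums using the elementary identities
\begin{equation*}
\sum_{i\in P_k}\sum_{\substack{T\subset P_k\\ i\in T}}\frac{x_T}{t}
=\sum_{\substack{T\subset P_k\\ T\neq\emptyset}} x_T,\qquad
\sum_{i\in P_k}\sum_{\substack{T\subset P_k\\ i\notin T,\,T\neq\emptyset}}\frac{x_T}{p_k-t}
=\sum_{\substack{T\subset P_k\\ T\neq\emptyset}} x_T,
\end{equation*}
obtained by swapping the order of summation (for each nonempty $T\subsetneq P_k$ there are exactly $t$ choices of $i\in T$ and exactly $p_k-t$ choices of $i\in P_k\setminus T$). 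Applied with $x_T=v(\bigcup_{r\in R}P_r\cup T)$, the two inner sums cancel when summed over $i\in P_k$, regardless of $R$. After this cancellation, what remains of $\sum_{i\in P_k}\gamma^C_i(N,v,P)$ is
\begin{equation*}
\frac{1}{2^{m-1}}\left(\sum_{\substack{R\subset M\\ k\in R}}\frac{v(\bigcup_{r\in R}P_r)}{r}
-\sum_{R\subseteq M\setminus k}\frac{v(\bigcup_{r\in R}P_r)}{m-r}\right)+\frac{v(N)}{m}.
\end{equation*}

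Finally I would compute $\gamma^C_k(M,v^P,P^m)$ by specialising formula (\ref{gamma coalicional}) to this case: here each ``union'' is a singleton, so $p_k=1$ and the inner sums (over $T\subset\{k\}$ with $k\in T$ or with $k\notin T$, $T\neq\emptyset$) are empty. Using $v^P(R)=v(\bigcup_{r\in R}P_r)$ and $v^P(M)=v(N)$, the expression for $\gamma^C_k(M,v^P,P^m)$ matches the displayed expression above verbatim, proving the equality.

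The only potentially tricky point is making sure the bookkeeping of ``$T\neq\emptyset$'' and ``$T\subset P_k$'' (strict) is respected when swapping orders of summation; once the two combinatorial identities above are verified, the rest is pure cancellation and relabelling, so no real obstacle arises.
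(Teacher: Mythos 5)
Your proof is correct and follows essentially the same route as the paper's: sum over $i\in P_k$, observe that the two inner sums cancel after interchanging the order of summation (each nonempty $T\subsetneq P_k$ being counted $t$ and $p_k-t$ times respectively), note that the $i$-independent terms absorb the factor $1/p_k$, and identify the remainder with $\gamma^C_k(M,v^P,P^m)$. Your explicit handling of the $T\neq\emptyset$ bookkeeping is in fact slightly more careful than the paper's, but the argument is the same.
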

	\begin{proof}	
	 Take a cooperative game with a coalition structure $(N,v,P)$ and $P_k\in P$. Then
	
	\begin{equation*}
	\begin{split}
	&\sum_{i\in P_k}\gamma^C_{i}\left( N,v,P\right)\\
	 &=\frac{1}{2^{m-1}}\frac{1}{2^{p_{k}-1}}\sum_{i\in P_k}\left(
	\sum_{R\subseteq M\backslash k}\sum_{T\subset P_{k}, i\in T}\frac{v(\bigcup\limits_{r\in
			R}P_{r}\cup T)}{t}-\sum_{R\subseteq M\backslash k}\sum_{T\subset P_{k}, i\notin T, T\neq \emptyset}\frac{v(\bigcup\limits_{r\in
			R}P_{r}\cup T)}{p_k-t}\right)\\
	&+\frac{1}{2^{m-1}}\frac{1}{p_k}\sum_{i\in P_k}
	\left(\sum_{R\subset M, k\in R}\frac{v(\bigcup\limits_{r\in
			R}P_{r})}{r}-\sum_{R\subseteq M\backslash k}\frac{v(\bigcup\limits_{r\in
			R}P_{r})}{m-r}\right)+\sum_{i\in P_k}\frac{v(N)}{mp_k}\\
	&=\frac{1}{2^{m-1}}\frac{1}{2^{p_{k}-1}}\left(
	\sum_{R\subseteq M\backslash k}\sum_{T\subset P_{k}}\frac{tv(\bigcup\limits_{r\in
			R}P_{r}\cup T)}{t}-\sum_{R\subseteq M\backslash k}\sum_{T\subset P_{k}}\frac{(p_k-t)v(\bigcup\limits_{r\in
			R}P_{r}\cup T)}{p_k-t}\right)\\
	&+\frac{1}{2^{m-1}}
	\left(\sum_{R\subset M, k\in R}\frac{v(\bigcup\limits_{r\in
			R}P_{r})}{r}-\sum_{R\subseteq M\backslash k}\frac{v(\bigcup\limits_{r\in
			R}P_{r})}{m-r}\right)+\frac{v(N)}{m}\\
	&=\frac{1}{2^{m-1}}
	\left(\sum_{R\subset M, k\in R}\frac{v(\bigcup\limits_{r\in
			R}P_{r})}{r}-\sum_{R\subseteq M\backslash k}\frac{v(\bigcup\limits_{r\in
			R}P_{r})}{m-r}\right)+\frac{v(N)}{m}=\gamma^C_{k}\left( M,v^P,P^m\right).
	\end{split}
	\end{equation*}
		\end{proof}
	
	The quotient game is an interesting property because it guarantees that the total worth obtained by the players of a union coincides with the worth obtained by the union in the game played by the unions with the trivial coalition structure. Note that the Banzhaf-Owen value does not satisfy this property; however, Alonso-Meijide and Fiestras-Janeiro (2002) introduces the so-called symmetric coalitional Banzhaf value, which is an extension of the Banzhaf value to cooperative games with a coalition structure that satisfies the quotient game property. 
	
	In order to characterize $\gamma^C$, we introduce a new property for necessary players.

	\bigskip 
	\noindent
	\textbf{Necessary Players Get the Per Capita Coalitional Mean}$\mathbf{.}$ A coalitional value $g$ satisfies the property of necessary players get the per capita coalitional mean if for any coalitional game $\left( N,v,P\right) $ with $v(N)=0$ and for any necessary player $i\in P_{k}$ in $(N,v)$, it holds that 
	\begin{equation*}
	g_{i}\left( N,v,P\right) =\frac{1}{2^{m-1}}\left[\frac{1}{2^{p_{k}-1}}
	\sum_{R\subseteq M\backslash k}\sum_{T\subset P_{k}, i\in T}\frac{v(\bigcup\limits_{r\in
			R}P_{r}\cup T)}{t}+\frac{1}{p_k}
	\sum_{R\subseteq M, k\in R}\frac{v(\bigcup\limits_{r\in
			R}P_{r})}{r}\right]
	\end{equation*}
	where $t=|T|$ and $r=|R|$ for all $T\subset P_k$ and $R\subseteq M$.
	
	\vspace*{0.25cm}
	
\begin{theorem}
		The $\gamma^C$ value is the unique value for cooperative games with a coalition structure that satisfies the properties of additivity, necessary players get the per capita coalitional mean, efficiency, symmetry inside unions and symmetry among unions. 
		
\end{theorem}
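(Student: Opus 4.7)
The proof plan mirrors the structure of Theorem \ref{nuevo valor}, splitting into existence and uniqueness. Additivity of $\gamma^C$ is immediate from the linearity of (\ref{gamma coalicional}) in $v$. To verify the necessary players get the per capita coalitional mean property, I would take $(N,v,P)$ with $v(N)=0$ and a necessary player $i\in P_k$ and substitute into (\ref{gamma coalicional}): every coalition of the form $\bigcup_{r\in R}P_r\cup T$ appearing in the formula that avoids $i$ contributes $0$ by necessity, which kills the ``$i\notin T$, $T\neq\emptyset$'' piece of the first parenthesis and the ``$R\subseteq M\setminus k$'' piece of the second; the final $v(N)/(mp_k)$ term vanishes by hypothesis; and the $R=M$ summand in the positive part of the second parenthesis is zero because $v(N)=0$, which is what accounts for the difference between $R\subset M$ in (\ref{gamma coalicional}) and $R\subseteq M$ in the property. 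What remains is exactly the prescribed expression.

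For efficiency and symmetry among unions I would lean on the quotient game lemma just established, which gives $\sum_{i\in P_k}\gamma^C_i(N,v,P)=\gamma_k(M,v^P)$. Efficiency of $\gamma^C$ then reduces to efficiency of $\gamma$, since $\sum_{i\in N}\gamma^C_i=\sum_{k\in M}\gamma_k(M,v^P)=v^P(M)=v(N)$. If $P_k,P_l$ are symmetric unions in $(N,v,P)$, then $k,l$ are symmetric players in the quotient game $(M,v^P)$, so by the symmetry of $\gamma$ (Theorem \ref{nuevo valor}) the values $\gamma_k(M,v^P)$ and $\gamma_l(M,v^P)$ agree, and the identity above gives equality of the two union totals. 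Symmetry inside unions is the only axiom that does not come for free from $\gamma$: I would prove it by the same index-swap argument used for $\gamma$, splitting the sums over $T\subset P_k$ in (\ref{gamma coalicional}) according to whether $T$ contains $i$, $j$, both, or neither, and using $v(S\cup\{i\})=v(S\cup\{j\})$ to rewrite each summand after exchanging the two players.

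For uniqueness, I would decompose via additivity over the canonical basis $\{e_S\}_{S\in 2^N\setminus\emptyset}$ and show that any value $f$ satisfying the five axioms agrees with $\gamma^C$ on each $(N,ce_S,P)$. For $S=N$, every pair of players and every pair of unions is symmetric in $e_N$; symmetry inside unions forces $f_i$ to be constant on each $P_k$, symmetry among unions forces $p_k f_i$ to be constant across $k$, and efficiency pins down that common constant, giving $f_i=c/(mp_k)$, matching $\gamma^C$. For $S\subsetneq N$ one has $e_S(N)=0$, so the necessary players axiom directly determines $f_i$ for every $i\in S$; players in $P_k\setminus S$ are pairwise symmetric in $e_S$ and so share a single unknown value $h_k$ by symmetry inside unions. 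A short case distinction on whether $S$ is of the form $\bigcup_{r\in R_0}P_r$ identifies which pairs of unions are symmetric in $e_S$, and in each case the combination of symmetry among unions with efficiency provides enough linear constraints to solve for the $h_k$. Since $\gamma^C$ itself satisfies all five axioms, the unique solution is $\gamma^C(N,ce_S,P)$.

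The main obstacle I anticipate is precisely this case distinction in the uniqueness step. When $S=\bigcup_{r\in R_0}P_r$ is a union of complete blocks, the symmetric-union relation partitions $M$ into $R_0$ and $M\setminus R_0$, and the linear system separates accordingly; when $S$ cuts some block, every pair of unions turns out to be symmetric in $e_S$, making all union totals equal and collapsing the system. One has to verify in both regimes that the $h_k$ are uniquely determined and that the determined values are consistent with the already-fixed payoffs of necessary players. Once this case analysis is in place, the remaining pieces of the proof reduce either to direct substitution or to invoking already-established facts about $\gamma$ via the quotient game lemma.
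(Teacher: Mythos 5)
Your proposal is correct and follows essentially the same route as the paper: additivity and the necessary-players property by direct substitution into (\ref{gamma coalicional}), symmetry inside unions by the index-swap argument, efficiency and symmetry among unions via the quotient game lemma together with the corresponding properties of $\gamma$, and uniqueness by additive decomposition over the canonical basis $\{e_S\}$. The only difference is that you spell out the linear-algebra details (the case distinction on whether $S$ is a union of complete blocks) that the paper leaves implicit in its assertion that the axioms characterize a unique value on each $(N,v(S)e_S,P)$; your analysis of those cases is sound.
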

\begin{proof}
	(Existence). It is clear that $\gamma^C$ satisfies additivity. To check that it satisfies the necessary players get the per capita coalitional mean property take a cooperative game with a coalition structure $\left( N,v,P\right)$ with $v(N)=0$ and such that $i\in P_k\subseteq N$ is a necessary player in $\left( N,v\right) $. Then expression (\ref{gamma coalicional}) reduces to
			\begin{eqnarray}
				\begin{split}
			\nonumber
			\displaystyle\gamma^C_i(N,v,P)&=\displaystyle\frac{1}{2^{m-1}}\frac{1}{2^{p_{k}-1}}\left(\sum_{R\subseteq M\backslash k}\sum_{T\subset P_{k}, i\in T}\frac{v(\bigcup\limits_{r\in
					R}P_{r}\cup T)}{t}\right)+\frac{1}{2^{m-1}}\frac{1}{p_k}
				\left(\sum_{R\subset M, k\in R}\frac{v(\bigcup\limits_{r\in
				R}P_{r})}{r}\right)\\
				\nonumber
			&\displaystyle= \frac{1}{2^{m-1}}\frac{1}{2^{p_{k}-1}}
			\sum_{R\subseteq M\backslash k}\sum_{T\subset P_{k}}\frac{v(\bigcup\limits_{r\in
					R}P_{r}\cup T)}{t}+\frac{1}{2^{m-1}}\frac{1}{p_k}
			\sum_{R\subseteq M}\frac{v(\bigcup\limits_{r\in
					R}P_{r})}{r}.
				\end{split}
			\end{eqnarray}
		
			To check that $\gamma^C$ satisfies symmetry inside coalitions take a cooperative game with a coalition structure $(N,v, P)$ and a pair of symmetric players in $(N,v)$ $i,j\in P_k$ with $P_k\in P$.
			Notice that, for a fixed $R\subseteq M\backslash k$,
			\begin{eqnarray}
			\begin{split}
			\nonumber
			&\sum_{T\subset P_{k}, i\in T}\frac{v(\bigcup\limits_{r\in R}P_{r}\cup T)}{t}-\sum_{T\subset P_{k}, i\notin T, T\neq \emptyset}\frac{v(\bigcup\limits_{r\in R}P_{r}\cup T)}{p_k-t}\\
				&=\sum_{T\subseteq P_{k}\backslash\{i,j\}}\frac{v(\bigcup\limits_{r\in R}P_{r}\cup T\cup\{i\})}{t+1}+\sum_{T\subset P_{k}\backslash\{i,j\}}\frac{v(\bigcup\limits_{r\in R}P_{r}\cup T\cup\{i,j\})}{t+2}\\
				&-\sum_{T\subseteq P_{k}\backslash\{i,j\}, T\neq \emptyset} \frac{v(\bigcup\limits_{r\in R}P_{r}\cup T)}{p_k-t}-\sum_{T\subseteq P_{k}\backslash\{i,j\}} \frac{v(\bigcup\limits_{r\in R}P_{r}\cup T\cup\{j\})}{p_k-t-1}.
				\end{split}
			\end{eqnarray}
			%
			\noindent
			Now, since $i,j$ are symmetric in $(N,v)$, the last expression is equal to
			
				\begin{eqnarray}
			\begin{split}
			\nonumber
			&	\sum_{T\subseteq P_{k}\backslash\{i,j\}}\frac{v(\bigcup\limits_{r\in R}P_{r}\cup T\cup\{j\})}{t+1}+\sum_{T\subset P_{k}\backslash\{i,j\}}\frac{v(\bigcup\limits_{r\in R}P_{r}\cup T\cup\{i,j\})}{t+2}\\
			&	-\sum_{T\subseteq P_{k}\backslash\{i,j\}, T\neq \emptyset} \frac{v(\bigcup\limits_{r\in R}P_{r}\cup T)}{p_k-t}-\sum_{T\subseteq P_{k}\backslash\{i,j\}} \frac{v(\bigcup\limits_{r\in R}P_{r}\cup T\cup\{i\})}{p_k-t-1}
				\end{split}
				\end{eqnarray}
			
			\noindent
			and then it is clear that $\gamma^C_i(N,v,P)=\gamma^C_j(N,v,P)$.
			
			Since $\gamma^C$ satisfies the quotient game property and it is a coalitional value of $\gamma$, then
			$$\sum_{i\in P_k}\gamma^C_{i}\left( N,v,P\right)=\gamma^C_{k}\left( M,v^P,P^m\right)=\gamma_k(M,v^P).$$
			Now, the efficiency and the symmetry properties of $\gamma$ imply that $\gamma^C$ satisfies symmetry among unions and efficiency.

			\noindent (Uniqueness). Take $g$, a value for cooperative games with a coalition structure that satisfies efficiency, symmetry inside unions, symmetry among unions, necessary players get the per capita coalitional mean and additivity, and take a cooperative game with a coalition structure $(N,v,P)$. We prove now that $g(N,v,P)=\gamma^C (N,v,P)$. Indeed, consider the basis of the vector space of characteristic functions of  cooperative games with set of players $N$ given by: $\{ e_S\}_{S\in 2^N\setminus\emptyset}$ (see expression (\ref{basecanonica})). Observe that  $v$ can be written in a unique way as a linear combination of the elements of the basis: $v=\sum_{S\in 2^N\setminus\emptyset} v(S) e_S$. Since $g$ satisfies additivity,
			$$g(N,v,P)=\sum_{S\in 2^N\setminus\emptyset} g(N,v(S)e_S,P).$$
			Notice that efficiency, symmetry inside unions, symmetry among unions, and necessary players get the per capita coalitional mean characterize a unique value in the class of games $\{ (N,v(S)e_S,P)\ |\ S\subset N, S\neq\emptyset\}$. Besides, efficiency, symmetry inside unions and symmetry among unions, characterize a unique value for $(N,v(N)e_N,P)$.  Hence $g(N,v,P)=\gamma^C (N,v,P)$.
\end{proof}

Now, in an analogous way as we obtain $\Gamma$ from $\gamma$, we introduce the following value.		
		
		\begin{definition} 
			The $\Gamma^C$ value for cooperative games with a coalition structure is given for every $(N,v,P)\in{\cal G}_N^{cs}$ and every $i\in P_k$ by:
			\begin{equation}
			\label{Gammaunions}
			\Gamma^C_i(N,v,P)=v(\{i\})+\frac{v(P_k)-\sum_{j\in P_k}v(\{j\})}{p_k}+\gamma^C_i(N,v^{0'},P),
			\end{equation}
			where $v^{0'}(S)=v(S)-\sum_{r\in R}v(P_r)-\sum_{j\in S\backslash (\cup_{r\in R}P_r)}v(\{j\})$ and $R=\{r\in M\ |\ P_r\subseteq S\}$ for all $S\subseteq N$.
		\end{definition}

	As for $\gamma^C$, we check that $\Gamma^C$ is a coalitional value of $\Gamma$. Take the cooperative game with a coalition structure $(N,v,P^n)$. Then
	
	\begin{equation*}
	\begin{split}
	\Gamma^C_{i}\left( N,v,P^n\right) &=v(\{i\})+\frac{v(P_k)-\sum_{j\in P_k}v(\{j\})}{p_k}\\
	&+\frac{1}{2^{m-1}}
	\left(\sum_{R\subset M, k\in R}\frac{v^{0'}(\bigcup\limits_{r\in
			R}P_{r})}{r}-\sum_{R\subseteq M\backslash k}\frac{v^{0'}(\bigcup\limits_{r\in
			R}P_{r})}{m-r}\right)+\frac{v^{0'}(N)}{mp_k}\\
	&=v(\{i\})+\frac{1}{2^{n-1}}
	\left(\sum_{S\subset N, i\in S}\frac{v^0(S)}{s}-\sum_{S\subseteq N\backslash \{i\}}\frac{v^0(S)}{n-s}\right)+\frac{v^0(N)}{n}\\
	&=\Gamma_{i}\left(N,v\right).
	\end{split}
	\end{equation*}
		
	Now we provide an axiomatic characterization of $\Gamma^C$. We start with a lemma concerning the quotient game property.
	
	\begin{lemma}
		The $\Gamma^C$ value satisfies the quotient game property, i.e., that $$\sum_{i\in P_k}\Gamma^C_{i}\left( N,v,P\right) =\Gamma^C_{k}\left( M,v^P,P^{m}\right)$$ 
		for all $P_k\in P$, where $v^P(R)=v(\cup_{r\in R}P_r)$ for all $R\subseteq M$, and $P^m=\left\{ \left\{ 1\right\}  ,\ldots ,\left\{ m\right\} \right\}$.
	\end{lemma}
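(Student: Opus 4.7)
The plan is to expand the sum $\sum_{i\in P_k}\Gamma^C_i(N,v,P)$ using formula (\ref{Gammaunions}), reduce the three resulting pieces to a form that matches $\Gamma^C_k(M,v^P,P^m)$, and invoke the quotient game property already established for $\gamma^C$.

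First I would write
\begin{equation*}
\sum_{i\in P_k}\Gamma^C_i(N,v,P)=\sum_{i\in P_k}v(\{i\})+p_k\cdot\frac{v(P_k)-\sum_{j\in P_k}v(\{j\})}{p_k}+\sum_{i\in P_k}\gamma^C_i(N,v^{0'},P).
\end{equation*}
The first two terms collapse to $v(P_k)=v^P(\{k\})$. For the third term I would apply the quotient game property of $\gamma^C$ (the lemma proved just above) to the game $(N,v^{0'},P)$, obtaining
\begin{equation*}
\sum_{i\in P_k}\gamma^C_i(N,v^{0'},P)=\gamma^C_k(M,(v^{0'})^P,P^m).
\end{equation*}

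The key step is then to identify $(v^{0'})^P$ with $(v^P)^{0'}$. Given $R\subseteq M$, letting $S=\bigcup_{r\in R}P_r$, the disjointness of the $P_r$'s forces $\{r'\in M\mid P_{r'}\subseteq S\}=R$ and $S\setminus(\bigcup_{r\in R}P_r)=\emptyset$, so the definition of $v^{0'}$ gives $(v^{0'})^P(R)=v(S)-\sum_{r\in R}v(P_r)$. On the other hand, applying the $0'$-transformation in the game $(M,v^P,P^m)$ (where the partition is trivial, so each singleton $\{r\}$ is its own union), one gets $(v^P)^{0'}(R)=v^P(R)-\sum_{r\in R}v^P(\{r\})=v(S)-\sum_{r\in R}v(P_r)$, matching the previous expression.

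Finally I would unfold the definition of $\Gamma^C_k(M,v^P,P^m)$: since the partition is $P^m$ the union containing $k$ is $\{k\}$ with $p_k=1$, so the middle correction term in (\ref{Gammaunions}) vanishes, yielding
\begin{equation*}
\Gamma^C_k(M,v^P,P^m)=v^P(\{k\})+\gamma^C_k(M,(v^P)^{0'},P^m).
\end{equation*}
Combining this with the identity $(v^{0'})^P=(v^P)^{0'}$ closes the chain. The routine but slightly delicate step is the verification of $(v^{0'})^P=(v^P)^{0'}$, which is where all the bookkeeping between the two levels (inside the coalitions and among them) happens; once that identity is in hand, the quotient property of $\gamma^C$ together with the clean cancellation of the linear term $v(\{i\})$ summed over $P_k$ against the per-capita correction makes the rest automatic.
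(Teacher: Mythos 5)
Your proof is correct, and it takes a somewhat different route from the paper's. The paper proves the lemma by brute force: it expands $\sum_{i\in P_k}\Gamma^C_i(N,v,P)$ from expression (\ref{Gammaunions}), repeats verbatim the interchange-and-cancellation of the inner double sum that was already carried out in the quotient-game lemma for $\gamma^C$ (the $t$ copies of $v^{0'}(\cdot)/t$ cancelling against the $p_k-t$ copies of $v^{0'}(\cdot)/(p_k-t)$), and then identifies the surviving expression with $\Gamma^C_k(M,v^P,P^m)$ by inspection. You instead treat the $\gamma^C$ quotient lemma as a black box, which lets you skip the recomputation entirely, and you isolate the one piece of bookkeeping the paper leaves implicit: the commutation identity $(v^{0'})^P=(v^P)^{0'}$, which is exactly what is needed to recognize the paper's final displayed expression as $\Gamma^C_k(M,v^P,P^m)$. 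Your factorization is arguably the cleaner argument --- it makes visible why the lemma is an immediate corollary of the $\gamma^C$ case and where the only nontrivial compatibility between the two levels of normalization lives --- at the cost of having to state and verify that identity explicitly; the paper's inlined computation is longer but self-contained and never names the identity. All four of your steps check out: the collapse of $\sum_{i\in P_k}v(\{i\})$ plus the per-capita correction to $v(P_k)=v^P(\{k\})$, the application of the earlier lemma to the game $(N,v^{0'},P)$, the verification that $R'=\{r'\mid P_{r'}\subseteq\bigcup_{r\in R}P_r\}=R$ by disjointness so both normalizations give $v^P(R)-\sum_{r\in R}v^P(\{r\})$, and the vanishing of the middle term of (\ref{Gammaunions}) when the partition is $P^m$ and $p_k=1$.
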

	\begin{proof}
	 Take a cooperative game with a coalition structure $(N,v,P)$ and $i\in N$ such that $P_k\in P$. Then 
	
	\begin{equation*}
	\begin{split}
	 &\sum_{i\in P_k}\Gamma^C_{i}\left( N,v,P\right)\\
	 &=v(P_k)+\frac{1}{2^{m-1}}\frac{1}{2^{p_{k}-1}}\sum_{i\in P_k}\left(
	\sum_{R\subseteq M\backslash k}\sum_{T\subset P_{k}, i\in T}\frac{v^{0'}(\bigcup\limits_{r\in
			R}P_{r}\cup T)}{t}\right.\\
	&\left.-\sum_{R\subseteq M\backslash k}\sum_{T\subset P_{k}, i\notin T, T\neq \emptyset}\frac{v^{0'}(\bigcup\limits_{r\in
			R}P_{r}\cup T)}{p_k-t}\right)\\
		&+\frac{1}{2^{m-1}}\frac{1}{p_k}\sum_{i\in P_k}
	\left(\sum_{R\subset M, k\in R}\frac{v^{0'}(\bigcup\limits_{r\in
			R}P_{r})}{r}-\sum_{R\subseteq M\backslash k}\frac{v^{0'}(\bigcup\limits_{r\in
			R}P_{r})}{m-r}\right)+\sum_{i\in P_k}\frac{v^{0'}(N)}{mp_k}\\
		&=v(P_k)\\
	&+\frac{1}{2^{m-1}}\frac{1}{2^{p_{k}-1}}\left(
	\sum_{R\subseteq M\backslash k}\sum_{T\subset P_{k}}\frac{tv^{0'}(\bigcup\limits_{r\in
			R}P_{r}\cup T)}{t}-\sum_{R\subseteq M\backslash k}\sum_{T\subset P_{k}}\frac{(p_k-t)v^{0'}(\bigcup\limits_{r\in
			R}P_{r}\cup T)}{p_k-t}\right)\\
	&+\frac{1}{2^{m-1}}
	\left(\sum_{R\subset M, k\in R}\frac{v^{0'}(\bigcup\limits_{r\in
			R}P_{r})}{r}-\sum_{R\subseteq M\backslash k}\frac{v^{0'}(\bigcup\limits_{r\in
			R}P_{r})}{m-r}\right)+\frac{v^{0'}(N)}{m}\\
		&=v(P_k)+\frac{1}{2^{m-1}}
	\left(\sum_{R\subset M, k\in R}\frac{v^{0'}(\bigcup\limits_{r\in
			R}P_{r})}{r}-\sum_{R\subseteq M\backslash k}\frac{v^{0'}(\bigcup\limits_{r\in
			R}P_{r})}{m-r}\right)+\frac{v^{0'}(N)}{m}\\
		&=\Gamma^C_{k}\left( M,v^P,P^m\right).
	\end{split}
	\end{equation*}
		\end{proof}
	
		In order to characterize $\Gamma^C$, we introduce a new property for necessary players.
		
		\bigskip 
	\noindent
	\textbf{Necessary Players Get the $0$-Normalized Per Capita Coalitional Mean}$\mathbf{.}$ A coalitional value $g$ satisfies the property of necessary players get the $0$-normalized per capita coalitional mean  if for any coalitional game $\left( N,v,P\right) $ with $v(N)=\sum_{r\in M}v(P_r)$ and for any necessary player $i\in P_{k}$ in $(N,v)$, it holds that 
	\begin{equation*}
	\begin{split}
	g_{i}\left( N,v,P\right) &=v(\{i\})+\frac{v(P_k)-\sum_{j\in P_k}v(\{j\})}{p_k}\\
	&+\frac{1}{2^{m-1}}\frac{1}{2^{p_{k}-1}}\sum_{R\subseteq M\backslash k}\sum_{T\subset P_{k}}\frac{v^{0'}(\bigcup\limits_{r\in
			R}P_{r}\cup T)}{t}+\frac{1}{2^{m-1}}\frac{1}{p_k}
	\sum_{R\subseteq M}\frac{v^{0'}(\bigcup\limits_{r\in
			R}P_{r})}{r},
	\end{split}
	\end{equation*}
	where $t=|T|$ and $r=|R|$ for all $T\subset P_k$ and $R\subseteq M$.
	
	\vspace*{0.25cm}
	
	\begin{theorem}
		\label{Gammacoal}
		$\Gamma^C$ is the unique coalitional value for cooperative games with a coalition structure that satisfies the properties of additivity, necessary players get the $0$-normalized per capita coalitional mean, efficiency, symmetry inside unions and symmetry among unions.
	\end{theorem}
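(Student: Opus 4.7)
The plan is to mimic the proof of Theorem \ref{nuevo valor2}, combining the quotient game lemma just established with the characterization of $\gamma^C$.

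For existence, I would verify the five properties in turn. Additivity of $\Gamma^C$ follows from additivity of $\gamma^C$ and the linearity of the map $v\mapsto v^{0'}$, which is manifest from its definition. The necessary players get the $0$-normalized per capita coalitional mean property is obtained by direct substitution: in a game satisfying $v(N)=\sum_{r\in M}v(P_r)$, a necessary player $i\in P_k$ forces $v(P_r)=0$ for $r\neq k$ and $v(\{j\})=0$ for $j\neq i$, which collapses the corrections inside $v^{0'}$ and makes the negative summations in the defining formula of $\gamma^C$ vanish, leaving the claimed expression. Efficiency and symmetry among unions follow by combining the quotient game identity $\sum_{i\in P_k}\Gamma^C_i(N,v,P)=\Gamma^C_k(M,v^P,P^m)=\Gamma_k(M,v^P)$ with efficiency and symmetry of $\Gamma$ on $(M,v^P)$. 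Symmetry inside unions descends from the corresponding property of $\gamma^C$ once one observes that if $i,j\in P_k$ are symmetric in $(N,v)$ then $v(\{i\})=v(\{j\})$ (take $T=\emptyset$), the block-level correction $(v(P_k)-\sum_{\ell\in P_k}v(\{\ell\}))/p_k$ treats $i$ and $j$ identically, and $v^{0'}$ is invariant under the swap $i\leftrightarrow j$.

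For uniqueness, let $g$ satisfy the five properties. By additivity it is enough to evaluate $g$ on a basis of characteristic functions, and I would take
\[
\mathcal{B}=\{e_S\,:\,\emptyset\neq S\subsetneq N,\ S\notin\{P_1,\ldots,P_m\}\}\cup\{e_{P_k}+e_N\,:\,k\in M\}\cup\{e_N\},
\]
which has $2^n-1$ elements and spans (every $e_{P_k}=(e_{P_k}+e_N)-e_N$, and every other $e_S$ is directly in $\mathcal{B}$). For each element of $\mathcal{B}$ other than $e_N$ the identity $v(N)=\sum_{r\in M}v(P_r)$ holds, so the necessary players axiom fixes $g_i$ for the necessary players of that basis game, namely those in $S$ (for $e_S$) or those in $P_k$ (for $e_{P_k}+e_N$); the remaining players' values are then determined by symmetry inside unions, symmetry among unions and efficiency, exactly as in the existence part of the characterization of $\gamma^C$. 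On $(N,e_N,P)$ the necessary-player hypothesis fails, but every pair of players inside a block is symmetric and every pair of blocks is symmetric in this game, so efficiency together with the two symmetry axioms yields $g_i=1/(mp_k)$ for $i\in P_k$.

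The main obstacle is the symmetry-among-unions analysis for the games $(N,e_S,P)$ in the first family of $\mathcal{B}$, especially when $S$ is itself a union of several blocks of $P$. The key observation is that $\bigcup_{r\in R}P_r\cup P_\ell$ is always a union of blocks, so $e_S$ equals $1$ on it only when $S$ is exactly that union; consequently, blocks of $P$ that are not required to represent $S$ as a union of blocks are mutually symmetric in $(N,e_S,P)$, and combined with symmetry inside unions this leaves precisely the degree of freedom that efficiency removes. Once this is in place the remainder of the uniqueness argument is routine bookkeeping paralleling the proof of Theorem \ref{nuevo valor2}.
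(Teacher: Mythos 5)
Your proposal is correct and follows essentially the same route as the paper's proof: existence by direct verification (with efficiency and symmetry among unions deduced from the quotient game property and the corresponding properties of $\Gamma$), and uniqueness via additivity over the basis $\{e_{P_r}+e_N\}_{r\in M}\cup\{e_S : S\neq P_r\}$. In fact your analysis of which unions are mutually symmetric in the games $(N,e_S,P)$ supplies a detail that the paper's uniqueness argument merely asserts.
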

	\begin{proof}
		(Existence). Since $\gamma^C$ satisfies additivity and symmetry inside unions, it is clear that $\Gamma^C$ also satisfies those properties. To check that it fulfils the necessary players get the $0$-normalized per capita coalitional mean take a cooperative game with a coalition structure $\left( N,v, P\right)$ with $v(N)=\sum_{r\in M}v(P_r)$ and such that $i\in N$, with $i\in P_k\in P$, is a necessary player in $\left( N,v\right) $. Then expression (\ref{Gammaunions}) reduces to
		$$
		\Gamma^C_i(N,v,P)=v(\{i\})+\frac{v(P_k)-\sum_{j\in P_k}v(\{j\})}{p_k}\\
			$$
			$$
			+\frac{1}{2^{m-1}}\frac{1}{2^{p_{k}-1}}\left(
			\sum_{R\subseteq M\backslash k}\sum_{T\subset P_{k}}\frac{v^{0'}(\bigcup\limits_{r\in
			R}P_{r}\cup T)}{t}\right)+\frac{1}{2^{m-1}}\frac{1}{p_k}
		\left(\sum_{R\subset M}\frac{v^{0'}(\bigcup\limits_{r\in
				R}P_{r})}{r}\right).
		$$
		
		Since this solution satisfies the quotient game property and it is a coalitional value of $\Gamma$, for a cooperative game with a coalition structure $(N,v,P)$ and $P_k\in P$ it holds that
		$$\sum_{i\in P_k}\Gamma^C_{i}\left( N,v,P\right)=\Gamma^C_{k}\left( M,v^P,P^m\right)=\Gamma_k(M,v^P).$$
		Then it is easy to check that $\Gamma^C$ satisfies symmetry among unions and efficiency taking into account that $\Gamma$ satisfies efficiency and symmetry.

		\noindent (Uniqueness). Take $g$, a value for cooperative games with a coalition structure that satisfies efficiency, symmetry, necessary players get the $0$-normalized per capita coalitional mean and additivity and take a cooperative game with a coalition structure $(N,v,P)$. We prove now that $g(N,v,P)=\Gamma^C (N,v,P)$. Indeed, consider the basis of the vector space of characteristic functions of cooperative games with set of players $N$ given by: 
		$$ \{ e_{P_r}+e_N\ |\ r\in M\}\cup \{e_S\ |\ S\in 2^N,S\neq P_r, r\in M\}.$$
		Observe that  $v$ can be written in a unique way as a linear combination of the elements of this basis. Since $g$ satisfies additivity and, moreover, the properties of efficiency, symmetry and necessary players get the $0$-normalized per capita coalitional mean characterize a unique value in the games of the basis, the proof is concluded.
	\end{proof}

	\section{Concluding Remarks}
	
	Notice that $G$, $\gamma$ and $\Gamma$, the three values introduced in Section \ref{seccion2}, satisfy the properties of additivity, efficiency and symmetry and then they can be written using the formula provided in Ruiz et al. (1998). Moreover, it is clear that $G$ and $\gamma$ satisfy the property of coalitional monotonicity dealt with in Wang et al. (2019) and thus, in view of Theorem 3.2 in Wang et al. (2019), they belong to the family of ideal values. Moreover, it is not difficult to prove that $\Gamma$ also satisfies the property of coalitional monotonicity and then it is also an ideal value. We provide next such a proof; to start with, we remember the property of coalitional monotonicity.
	
	\vspace*{0.15cm}

\noindent  
\textbf{Coalitional Monotonicity.} A value for cooperative games $f$ satisfies the property
of coalitional monotonicity if for any pair of cooperative games $\left( N,v\right)$ and $\left( N,w\right)$ fulfilling that there exists $T\subseteq N$ with $v(T)>w(T)$ and $v(S)=w(S)$ for all $S\subseteq N$, $S\neq T$, it holds that 
	\begin{equation*}
	f_i(N,v)\geq f_i(N,w)
	\end{equation*}
	for all $i\in T$.
	
	\vspace*{0.15cm}
	
	\noindent 
	In view of expressions (\ref{G}) and (\ref{Gamma}), it is clear that $G$ and $\gamma$ satisfy the property of coalitional monotonicity. With respect to $\Gamma$ notice that, in view of expressions (\ref{Gamma}) and (\ref{Gammacap}), for every TU-game $(N,u)$ and for every $i\in N$, $\Gamma_i(N,u)$ can be written as:
	
	\begin{eqnarray}
	\nonumber
	\Gamma_i(N,u)&=&\frac{1}{2^{n-1}}\sum_{S\subset N, i\in S}\frac{1}{s} \left(u(S)-\sum_{j\in S}u(j)\right)\\
	\label{e3}
	&&-\frac{1}{2^{n-1}}\sum_{S\subset N, i\not\in S} \frac{1}{n-s}\left(u(S)-\sum_{j\in S}u(j)\right)\\
	\nonumber
	&&+\frac{1}{n}\left(u(N)-\sum_{j\in N}u(j)\right)+u(i).
	\end{eqnarray}
	Take now $\left( N,v\right)$, $\left( N,w\right)$ and $T\subseteq N$ as in the statement of coalitional monotonicity. Using (\ref{e3}) it is clear that if $T$ has two or more elements, then $\Gamma_i(N,v)\geq \Gamma_i(N,w)$ for all $i\in T$. Assume now that $T=\{ i\}$ ($i\in N$). According to (\ref{Gammacap}), the coefficients of $v(i)$ and $w(i)$ in $\Gamma_i(N,v)$ and $\Gamma_i(N,w)$ are identical and given by:
	\begin{eqnarray*}
		&&-\frac{1}{2^{n-1}}\sum_{S\subset N, i\in S, S\neq i}\frac{1}{s}-\frac{1}{n}+1\\
		&=&-\frac{1}{2^{n-1}}\sum_{s=2}^{n-1}\frac{1}{s}\left(\begin{array}{c}n-1\\s-1\end{array}\right)-\frac{1}{n}+1\\
		&=&-\frac{1}{2^{n-1}}\sum_{s=2}^{n-1}\frac{1}{n}\left(\begin{array}{c}n\\s\end{array}\right)-\frac{1}{n}+1\\
		&=&-\frac{1}{2^{n-1}}\frac{1}{n}(2^{n}-1-n-1)-\frac{1}{n}+1=\frac{n-3}{n}+\frac{2+n}{2^{n-1}n}.
	\end{eqnarray*}
	It is easy to check that $\frac{n-3}{n}+\frac{2+n}{2^{n-1}n}> 0$ for all $n\geq 1$, which implies that $\Gamma_i(N,v)> \Gamma_i(N,w)$ and completes the proof.
	
	We finish this paper with a remark on the relation between our new values and the equal division and the equal surplus division values, that we denote by $ED$ and $ESD$ (see, for instance Alonso-Meijide et al., 2020). It is clear that, for every $(N,v)\in{\cal G}_N$ and every $i\in N$,
	
	\begin{equation*}
	\gamma_{i}\left( N,v\right) =ED_i(N,v)+\frac{1}{2^{n-1}}\left(\sum_{S\subset N, i\in S} \frac{v(S)}{s}-\sum_{S\subset N, i\not\in S} \frac{v(S)}{n-s}\right),
	\end{equation*}
	
	\begin{equation*}
	\Gamma_{i}\left( N,v\right) =ESD_i(N,v)+\frac{1}{2^{n-1}}\left(\sum_{S\subset N, i\in S} \frac{v^0(S)}{s}-\sum_{S\subset N, i\not\in S} \frac{v^0(S)}{n-s}\right).
	\end{equation*}
	
	Now, if $ED^U$ and $ESD2^U$ are the extensions of $ED$ and $ESD$ for cooperative games with a coalition structure introduced in Alonso-Meijide et al. (2020), then for every $(N,v,P)\in {\cal G}_N^{cs}$ and every $i\in N$ it holds that
	
		\begin{equation*}
	\begin{split}
	&\gamma^C_{i}\left( N,v,P\right) =ED^U_i(N,v,P)\\
	&+\frac{1}{2^{m-1}}\frac{1}{2^{p_{k}-1}}\left(
	\sum_{R\subseteq M\backslash k}\sum_{T\subset P_{k}, i\in T}\frac{v(\bigcup\limits_{r\in
			R}P_{r}\cup T)}{t}-\sum_{R\subseteq M\backslash k}\sum_{T\subset P_{k}, i\notin T, T\neq \emptyset}\frac{v(\bigcup\limits_{r\in
			R}P_{r}\cup T)}{p_k-t}\right)\\
	&+\frac{1}{2^{m-1}}\frac{1}{p_k}
	\left(\sum_{R\subset M, k\in R}\frac{v(\bigcup\limits_{r\in
			R}P_{r})}{r}-\sum_{R\subseteq M\backslash k}\frac{v(\bigcup\limits_{r\in
			R}P_{r})}{m-r}\right),
	\end{split}
	\end{equation*}
	
	\begin{equation*}
	\begin{split}
	&\Gamma^C_{i}\left( N,v,P\right) =ESD2^U_i(N,v,P)\\
	&+\frac{1}{2^{m-1}}\frac{1}{2^{p_{k}-1}}\left(
	\sum_{R\subseteq M\backslash k}\sum_{T\subset P_{k}, i\in T}\frac{v^{0'}(\bigcup\limits_{r\in
			R}P_{r}\cup T)}{t}-\sum_{R\subseteq M\backslash k}\sum_{T\subset P_{k}, i\notin T, T\neq \emptyset}\frac{v^{0'}(\bigcup\limits_{r\in
			R}P_{r}\cup T)}{p_k-t}\right)\\
	&+\frac{1}{2^{m-1}}\frac{1}{p_k}
	\left(\sum_{R\subset M, k\in R}\frac{v^{0'}(\bigcup\limits_{r\in
			R}P_{r})}{r}-\sum_{R\subseteq M\backslash k}\frac{v^{0'}(\bigcup\limits_{r\in
			R}P_{r})}{m-r}\right).
	\end{split}
	\end{equation*}
	
	\section*{Acknowledgements}
	
	This work has been supported by the ERDF, the MINECO/AEI grants MTM2017-87197-C3-1-P, MTM2017-87197-C3-3-P, and by the Xunta de Galicia (Grupos de Referencia Competitiva ED431C-2016-015 and ED431C-2017-38 and Centro Singular de Investigación de Galicia ED431G/01).
	
	\section*{References}
	Albino D, Scaruffi P, Moretti S, Coco S, Truini M, Di Cristofano C, Cavazzana A, Stigliani S, Bonassi S, Tonini GP (2008). Identification of low intratumoral gene expression heterogeneity in neuroblastic tumors by genome-wide expression analysis and game theory. Cancer 113, 1412-1422.\\
	Alonso-Meijide JM, Fiestras-Janeiro MG (2002). Modification of the Banzhaf value for games with a coalition structure. Annals of Operations Research 109, 213-227.\\
	Alonso-Meijide JM, Carreras F, Fiestras-Janeiro MG, Owen, G. (2007). A comparative axiomatic characterization of the Banzhaf-Owen coalitional value. Decision Support Systems 43, 701-712.\\
	Alonso-Meijide JM, Costa J, García-Jurado I (2019a). Null, Nullifying, and Necessary Agents: Parallel Characterizations of the Banzhaf and Shapley Values. Journal of Optimization Theory and Applications 180, 1027-1035.\\
	Alonso-Meijide JM, Costa J, García-Jurado I (2019b). Values, Nullifiers and Dummifiers. In: Handbook of the Shapley Value (E Algaba, V Fragnelli, J Sánchez-Soriano eds.), CRC Press, 75-92.\\
	Alonso-Meijide JM, Costa J, García-Jurado I, Gon\c{c}alves-Dosantos JC (2020). On egalitarian values for cooperative games with a priori unions. TOP. https://doi.org/10.1007/s11750-020-00553-2\\
	Amer R, Carreras F, Gimenez JM (2002). The modified Banzhaf value for games with a coalition structure: an axiomatic characterization. Mathematical Social Sciences 43, 45-54.\\
	Béal S, Navarro F (2020). Necessary versus equal players in axiomatic studies. Preprint\\
	Benati S, López-Blázquez F, Puerto J (2019). A stochastic approach to approximate values in cooperative games. European Journal of Operational Research 279, 93-106.\\
	Bergantiños G, Valencia-Toledo A, Vidal-Puga J (2018). Hart and Mas-Colell consistency in PERT problems. Discrete Applied Mathematics 243, 11-20.\\
	Carreras F, Puente A (2015). Coalitional multinomial probabilistic values. European Journal of Operational Research 245, 236-246.\\
	Casajus A (2010).  Another characterization of the Owen value without the additivity axiom. Theory and Decision 69, 523-536.\\
	Costa J (2016). A polynomial expression of the Owen value in the maintenance
	cost game. Optimization 65, 797-809.\\
	Driessen TSH, Funaki Y (1991). Coincidence of and collinearity between
	game theoretic solutions. OR Spectrum 13, 15-30.\\
	Fragnelli V, Iandolino A (2004). Cost allocation problems in urban solid wastes
	collection and disposal. Mathematical Methods of Operations Research 59, 447-463.\\
	Gonçalves-Dosantos JC, García-Jurado I, Costa J (2020). Sharing delay costs in stochastic scheduling problems with delays. 4OR-A Quarterly Journal of Operations Research. https://doi.org/10.1007/s10288-019-00427-9\\
	Gurram P, Kwon H, Davidson C (2016). Coalition game theory-based feature subspace selection for hyperspectral classification. IEEE Journal of Selected Topics in Applied Earth Observations and Remote Sensing 9, 2354-2364.\\
	Ju Y, Borm P, Ruys P (2007). The consensus value: a new solution concept for cooperative games. Social Choice and Welfare 28, 685-703.\\
	Khmelnitskaya AB, Yanovskaya EB (2007). Owen coalitional value without additivity axiom. Mathematical Methods of Operations Research 66, 255-261.\\
	Lorenzo-Freire S (2016). On new characterizations of the Owen value. Operations Research Letters 44, 491-494.\\
	Lucchetti R, Moretti S, Patrone F, Radrizzani P (2010). The Shapley and Banzhaf values in microarray games. Computers and Operations Research 37, 1406-1412.\\
	Moretti S, Patrone F (2008). Transversality of the Shapley value. Top 16, 1-41.\\
	Owen G (1975). Multilinear extensions and the Banzhaf value. Naval Research Logistics Quarterly 22, 741-750.\\
	Owen G (1977). Values of games with a priori unions. In: Mathematical Economics and Game Theory (R Henn, O Moeschlin eds.), Springer, 76-88.\\
    Owen G (1982). Modification of the Banzhaf-Coleman index for games with a priori unions. In: Power, Voting and Voting Power (MJ Holler ed.), 232-238.\\
    Ruiz LM, Valenciano F, Zarzuelo JM (1998). The family of least square values for transferable utility games. Games and Economic Behavior 24, 109-130.\\
	Schmeidler D (1969). The nucleolus of a characteristic function form game. SIAM Journal on Applied Mathematics 17, 1163-1170.\\
	Shapley LS (1953). A value for n-person games. In: Contributions to the Theory of Games II (HW Kuhn, AW Tucker eds.), Princeton University Press, 307-317.\\
	Strumbelj E, Kononenko I (2010). An efficient explanation of individual classifications using game theory. Journal of Machine Learning Research 11, 1-18.\\
	Tijs S (1981). Bounds for the core and the $\tau$-value. In: Game Theory and Mathematical Economics (O Moeschlin, D Pallaschke eds.), North Holland, 123-132.\\
	Vázquez M, van den Nouweland A, García-Jurado I (1997). Owen's coalitional value and aircraft landing fees. Mathematical Social Sciences 34,  273-286.\\
	Wang W, Sun H, van den Brink R, Xu G (2019). The family of ideal values for cooperative games. Journal of Optimization Theory and Applications 180, 1065-1086.\\
	Yang Z, Zhang Q (2015). Resource allocation based on DEA and modified Shapley value. Applied Mathematics and Computation 263, 280-286.\\

\end{document}